\newcommand{\sot}{$1$-$2$-OT}
\newcommand{\ot}{{\sc 1-2-ot}}
\newcommand{\dep}[2]{\ensuremath{#1 \searrow #2}}
\newcommand{\assign}{\ensuremath{\kern.5ex\raisebox{.1ex}{\mbox{\rm:}}\kern -.3em =}}
\newcommand{\hil}{\mathcal{H}}
\newcommand{\ID}[1]{\ensuremath{{\sf ID}_{#1}}}
\newcounter{itm}
\newenvironment{myprotocol}[1]
  {\begin{minipage}{\columnwidth}
    \begin{framed}\hspace{0ex}
     \begin{minipage}{0.99\columnwidth}
       {\bf #1:}
       \setcounter{itm}{1}
       \begin{list}{\arabic{itm}.}{\usecounter{itm}
          \setlength{\itemsep}{0mm}
          \setlength{\leftmargin}{\labelwidth}
          \setlength{\topsep}{\parsep}}}
   {    \end{list}
       \vspace{-1.5ex}
       \end{minipage}
     \end{framed}
    \end{minipage}\vspace{-0.6ex}}
\newenvironment{myfigure}[1]
         {\begin{figure}[#1] \centering}
         { \end{figure}}
\newcommand{\myparagraph}[1]{\medskip\noindent{\sc #1}}
\title{Two-Party Quantum Protocols Do Not Compose Securely 
Against
 Honest-But-Curious Adversaries}
\author{
Louis Salvail\inst{1}\fnmsep\thanks{QUSEP, Quantum Security in Practice, funded by the
                                   Danish Natural Science Research Council.}
 \and Miroslava Sot\'{a}kov\'{a}\inst{2,3}
}
\institute{
Department of Computer Science, Universit\'e de Montr\'eal, QC, Canada\,
(\email{salvail@iro.umontreal.ca})
\and
Department of Computer Science, Aarhus University, Denmark\,
(\email{mirka@cs.au.dk})
\and
Research Center for Quantum Information, Slovak Academy of Sciences, Bratislava, Slovakia
}
\begin{document}
\maketitle

\begin{abstract}
In this paper, we build upon the model of two-party quantum computation introduced by Salvail 
et al.~\cite{SSS09} 
and show that in this model, only trivial correct two-party quantum protocols are weakly self-composable. We do so 
by defining a protocol $\Pi$ calling any non-trivial sub-protocol $\pi$ $N$ times 
and showing that there is a quantum honest-but-curious strategy that cannot be modeled 
by acting locally in every single copy of $\pi$. In order to achieve this, we assign 
a real value called \emph{payoff} to any strategy for $\Pi$ and show that that 
there is a gap between the highest payoff achievable by coherent and local strategies. 
\end{abstract}

\section{Introduction}
The most striking result in quantum cryptography is certainly
the capacity to perform secret-key distribution~\cite{BB84}
securely by a universally composable
quantum protocol~\cite{RK05,BHLMO05}.
This is in sharp contrast with what is achievable using classical
communication alone. A different class of cryptographic primitives,
called two-party computation, is not as easy to solve using quantum
communication. In fact, some two-party primitives are
as impossible to achieve using quantum communication as
they are based solely on clasical communication.
In particular,
well-known two-party primitives like oblivious
transfer~\cite{Lo97},
bit commitment~\cite{Mayers97,LC97}, and fair coin-tossing~\cite{Kitaev03}
have neither classical nor quantum secure implementations. However, there
exists weaker primitives achievable by quantum protocols
but impossible in the  classical world.
For instance, sharing an EPR pair allows for two players to
implement a noisy version of a two-party primitive called
\emph{non-local box} (NLB) \footnote{${\rm NLB}:(x^A,y^B) 
\mapsto (a^A,(a\oplus xy)^B)$, where $x$ and $y$ are Alice's and Bob's 
respective input bits, $a$ is a uniformly random output bit for Alice, 
and $a\oplus xy$ is the output bit for Bob.} with noise rate
$\sin^2\frac{\pi}{8}$~\cite{PR94,BLMPPR05}, which is a task impossible
to achieve classically. Due to the local equivalence
between randomized
NLB  and randomized \emph{one-out-of-two oblivious transfer} (\sot)
\footnote{\sot\  $:((x_0,x_1)^A,c^B)\mapsto x_c^B$, where $x_0$ and $x_1$
denote Alice's input bits, $c$ denotes Bob's input bits, and $x_c$ denotes
Bob's output bit.}~\cite{WW05b}, a noisy version of
randomized \sot\ with noise rate
$\sin^2\frac{\pi}{8}$ can also be obtained from one shared EPR-pair
while no such classical protocol exists.

The cryptographic power of quantum protocols for
two-party computations
have been investigated in ~\cite{SSS09}.
Let Alice and Bob be the two parties involve in
a two-party computation. In this model,
a primitive
is modelled by a joint probability distribution $P_{X,Y}$ 
where Alice outputs $x$
and  Bob $y$ with probability $P_{X,Y}(x,y)$. Any two-party
primitive can be randomized (the input to the functionality
are picked at random) so that its functionality is captured
by an appropriate choice of $P_{X,Y}$. We say that $P_{X,Y}$ is
trivial if it can be implemented by a correct classical protocol
against honest-but-curious (HBC) adversaries. Intuitively,
a quantum protocol for primitive $P_{X,Y}$ is correct if
once Alice and Bob get their respective outputs with
joint probability $P_{X,Y}$ then nothing else is available
to each party about the
other party's output. Such a protocol can be \emph{purified} and the measurements
 yielding the outcomes $X$ and $Y$ can be postponed to the end 
of the protocol's execution. The state of the protocol 
just before the final measurements take place, is then called 
\emph{quantum embedding} of the implemented primitive. In addition, 
\emph{regular embedding} of a primitive is defined to 
be an embedding where Alice and Bob do not posess any other (auxiliary) registers 
than the ones used to measure their respective outputs.
In~\cite{SSS09}, it is shown that although quantum protocols
can implement correctly non-trivial functionalities they will always
 leak extra information even against the  weak class
of honest-but-curious
quantum adversaries. While classical
protocols can only implement trivial primitives, quantum protocols
necessarily leak when they correctly implement something non-trivial.

In this paper, we look at another aspect of two-party quantum protocols:
their ability to compose against quantum honest-but-curious
adversaries (QHBC). In order to guarantee composability, the functionality
of a quantum protocol should be modeled by some
classical ideal functionality.
An ideal functionality is a classical description of what
the protocol achieves independently of the environment
in which it is executed. If a protocol does not admit
such a description then it can clearly not be used in
any environment while keeping its functionality,
and
such a protocol would not compose securely in all applications.
In particular, in this thesis we investigate composability 
of non-trivial quantum protocols. 
An embedding of $P_{X,Y}$ is called \emph{trivial} if both 
parties can access at least the same amount of information about 
the functionality as it is possible in some classical protocol 
for $P_{X,Y}$ in the HBC model. Otherwise, it is said to be 
\emph{non-trivial}. A quantum protocol is non-trivial
if its bipartite purification results
in a non-trivial embedding. We show that
no non-trivial 
quantum protocol composes freely
even if the adversary is restricted to be honest-but-curious.
No ideal functionality, even with an uncountable
set of rules, can fully characterize the behavior
of a quantum protocol in all environments.
This is clearly another severe limit
to the cryptographic power of two-party quantum protocols.

It is not too
difficult to show that any trivial embedding can be implemented
by  a quantum protocol that composes against QHBC adversaries.
In the other direction,
let $\ket{\psi({\pi})}\in \hil_{A}\otimes\hil_{B}$
be a non-trivial embedding of $P_{X,Y}$
corresponding to the bipartite purification of quantum protocol $\pi$.
We know that $\ket{\psi({\pi})}$ necessarily
leaks information
towards a QHBC adversary. Any ideal functionality
$\ID{{\pi}}$
for protocol $\pi$ trying to account for
honest-but-curious behaviors should allow to simulate
all measurements applied either in $\hil_{A}$ or $\hil_{B}$
through an appropriate call to $\ID{{\pi}}$. One way
to do this is to define $\ID{{\pi}}$ by a function
$[0..1]\times[0..1]
\mapsto [0..1]\times [0..1]$ where $\ID{{\pi}}(0,0)$
corresponds to the honest behavior on both sides:
$\ID{\pi}(0,0) = (x,y)$ with probability $P_{X,Y}(x,y)$
where $(x,y)$ is encoded as a pair of real numbers. Other inputs
to the ideal functionality allow for the simulation of different
strategies mounted by the QHBC adversary.
In its most general form, an ideal functionality
could have an uncountable set of possible inputs in order to
allow the simulation of all QHBC adversaries. 
We show that even allowing for these general
ideal functionalities, composed non-trivial protocols cannot be modeled
by one single ideal functionality. It means that for a protocol $\Pi$ calling 
$N$ times any non-trivial sub-protocol $\pi$ , there is a QHBC strategy that cannot be modeled 
by arbitrarily many calls of $\ID{\pi}$, each of them acting locally on a single copy of $\pi$.

In order to achieve this, we provide a generic example of such a protocol. Protocol $\Pi$
produces, as output,  a real-value ${p}$ that we call \emph{payoff}.
The payoff $p$
represents how well the adversary can compare, without error,
two factors of product states extracted from the $N$ executions of protocol
$\pi$. From a result of  \cite{KKB05},
the product states are constructed in such a way that
no individual measurement can do as well as the best coherent
measurement. It
follows that
the payoff corresponding to any adversary restricted to deal
with $\pi$ through
any ideal functionality would necessarily be lower than the one
an adversary
applying coherent strategies on both parts of the product state
could get. This implies that no ideal functionality for $\pi$ would
ever account for all QHBC strategies in $\Pi$.
Moreover, the advantage of coherent strategies over individual ones
can be made constant.
The result follows.

\section{Preliminaries}

\myparagraph{Classical Information Theory -- Dependent Part}
The following definition introduces a random variable describing the correlation between two random variables $X$ and $Y$.

\begin{definition}[Dependent part~\cite{WW04}]
\label{dep_part}
For two random variables $X,Y$, let $f_X(x) \assign P_{Y|X=x}$. Then the
\emph{dependent part of $X$ with respect to $Y$} is defined as $\dep{X}{Y} \assign f_X(X)$.
\end{definition}

The dependent part $\dep{X}{Y}$ is the minimum random variable
from the random variables computable from $X$ such that
$X\leftrightarrow \dep{X}{Y} \leftrightarrow Y$ is a Markov chain
\cite{WW04}. It means that for any random variable $K=f(X)$ such that $X\leftrightarrow K \leftrightarrow Y$ is a Markov chain, there exists a function $g$ such that $g(K)=\dep{X}{Y}$. 
Immediately from the definition we get several other properties of $\dep{X}{Y}$~\cite{WW04}:
$H(Y|\dep{X}{Y})=H(Y|X)$, $I(X;Y)=I(\dep{X}{Y};Y)$, and $\dep{X}{Y}=\dep{X}{(\dep{Y}{X})}$. The second and 
the third formula yield $I(X;Y)=I(\dep{X}{Y};\dep{Y}{X})$.

The notion of dependent part has been further investigated in
\cite{FWW04,IMNW04,WW05a}.  Wullschleger and Wolf have shown that
quantities $H(X\searrow Y|Y)$ and $H(Y\searrow X|X)$ are monotones for
two-party protocols\cite{WW05a}.  That is, none of these values can
increase during classical two-party protocols. In particular, if Alice
and Bob start without sharing any non-trivial cryptographic resource
then classical two-party protocols can only produce $(X,Y)$ such that:
$H(X\searrow Y|Y)=H(Y\searrow X|X)=0$, since $H(X\searrow Y|Y)>0$ if
and only if $H(Y\searrow X|X)>0$~\cite{WW05a}.  Conversely, any
primitive satisfying $H(X\searrow Y|Y)=H(Y\searrow X|X)=0$ can be
implemented securely in the honest-but-curious (HBC) model. We call
such primitives \emph{trivial}.

\myparagraph{Quantum Information Theory and State Distinguishability}
Let $\ket{\psi}_{AB} \in
\hil_{AB}$ be an arbitrary pure state of the joint systems $A$ and
$B$. The states of these subsystems are $\rho_A = \tr_B\proj{\psi}$
and $\rho_B=\tr_A\proj{\psi}$, respectively.  We denote by $S(A)
\assign S(\rho_A)$ and $S(B) \assign S(\rho_B)$ the von Neumann
entropy (defined as the Shannon entropy of the eigenvalues of the
density matrix) of subsystem $A$ and $B$ respectively. Since the joint
system is in a pure state, it follows easily from the Schmidt
decomposition that $S(A)=S(B)$ (see e.g.~\cite{NC00}). Analogously 
to their classical counterparts, we can define quantum conditional 
entropy $S(A|B)\assign S(AB)-S(B)$, and quantum mutual information 
$S(A;B)\assign S(A)+S(B)-S(AB)=S(A)-S(A|B).$ Even though in general, 
$S(A|B)$ can be negative, $S(A|B)\geq 0$ is always true if $A$ is a classical 
random variable. 

The following lemma gives a relation between the probability of error and the 
probability of conclusive answer of a POVM used for discriminating two 
pure state. 

\begin{lemma}[\cite{BC98}]
\label{cb98a}
Let the probability of a conclusive outcome and the error-probability of some
POVM applied to a state, sampled uniformly at random from a pair of pure states
 $(\ket{\psi_0},\ket{\psi_1})$, be denoted by 
$q_c$ and $q_{\rm{err}}$, respectively. Then
$$q_{\rm{err}}\geq \frac{1}{2}\left(q_c-\sqrt{q_c^2-(q_c-1+|\bracket{\psi_0}{\psi_1}|)^2}\right).$$
\end{lemma}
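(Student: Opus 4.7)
The approach is to exploit positive-semidefiniteness of a general three-outcome discrimination POVM via a Cauchy--Schwarz bound on off-diagonal matrix elements, then extremize over the remaining parameters. Parameterize the POVM as $\{E_0,E_1,E_?\}$ with $E_0+E_1+E_?=I$, where outcome $j\in\{0,1\}$ is the guess ``$\ket{\psi_j}$'' and $?$ is inconclusive. Writing $E_j=M_j^\dagger M_j$, set $\alpha_j:=\bra{\psi_0}E_j\ket{\psi_0}=\|M_j\ket{\psi_0}\|^2$ and $\beta_j:=\bra{\psi_1}E_j\ket{\psi_1}=\|M_j\ket{\psi_1}\|^2$. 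These encode the quantities of interest: $q_c=\tfrac12(\alpha_0+\alpha_1+\beta_0+\beta_1)$, $q_{\rm err}=\tfrac12(\alpha_1+\beta_0)$, while normalization forces $\alpha_?=1-\alpha_0-\alpha_1$ and $\beta_?=1-\beta_0-\beta_1$.

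The central inequality comes from Cauchy--Schwarz factor by factor: $|\bra{\psi_0}E_j\ket{\psi_1}|=|\bracket{M_j\psi_0}{M_j\psi_1}|\leq\sqrt{\alpha_j\beta_j}$. Summing the identity $\bracket{\psi_0}{\psi_1}=\sum_j\bra{\psi_0}E_j\ket{\psi_1}$ and applying the triangle inequality gives, with $s:=|\bracket{\psi_0}{\psi_1}|$,
$$s\leq\sqrt{\alpha_0\beta_0}+\sqrt{\alpha_1\beta_1}+\sqrt{(1-\alpha_0-\alpha_1)(1-\beta_0-\beta_1)}.$$
Establishing the claimed lower bound on $q_{\rm err}$ is now equivalent to showing that, for fixed $q_c$ and $q_{\rm err}$, the maximum of this right-hand side over admissible quadruples with $\alpha_1+\beta_0=2q_{\rm err}$ and $\alpha_0+\beta_1=2(q_c-q_{\rm err})$ must still be at least $s$.

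The objective is a sum of concave terms (the geometric mean of two nonnegatives is jointly concave), and together with the two linear constraints it is invariant under the label swap $(\alpha_0,\alpha_1)\leftrightarrow(\beta_1,\beta_0)$. Averaging any feasible quadruple with its swapped companion thus yields a feasible symmetric quadruple whose objective value is at least the original; hence the maximum is attained on the symmetric locus $\alpha_0=\beta_1=q_c-q_{\rm err}$, $\alpha_1=\beta_0=q_{\rm err}$, with $\alpha_?=\beta_?=1-q_c$ independently maximizing the third term. The inequality there collapses to $s\leq 2\sqrt{q_{\rm err}(q_c-q_{\rm err})}+(1-q_c)$. Rearranging to $q_c-1+s\leq 2\sqrt{q_{\rm err}(q_c-q_{\rm err})}$ and squaring (the case $q_c-1+s<0$ makes the bound vacuous) produces the quadratic $4q_{\rm err}^2-4q_c\,q_{\rm err}+(q_c-1+s)^2\leq 0$, whose smaller root is exactly $\tfrac12\bigl(q_c-\sqrt{q_c^2-(q_c-1+s)^2}\bigr)$. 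The delicate step is justifying the symmetrization, which I would verify either via the concavity/swap argument just described or, alternatively, by a direct calculus check on the one-parameter family of perturbations preserving both linear constraints, showing that the displacement $u$ from the symmetric point satisfies $f'(0)=0$ and $f''(0)\leq 0$ for $f(u)=\sqrt{(a+u)(b+u)}+\sqrt{(a-u)(b-u)}$ with $a=q_c-q_{\rm err}$, $b=q_{\rm err}$.
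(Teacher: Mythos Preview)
The paper does not give its own proof of this lemma; it is quoted from \cite{BC98} without argument, so there is nothing in the paper to compare your approach against. That said, your argument is correct and is essentially the standard derivation: the Cauchy--Schwarz bound $|\bra{\psi_0}E_j\ket{\psi_1}|\le\sqrt{\alpha_j\beta_j}$ summed over outcomes gives the ``fidelity constraint'' $s\le\sum_j\sqrt{\alpha_j\beta_j}$, and extremizing this under the linear constraints fixing $q_c$ and $q_{\rm err}$ yields the stated quadratic in $q_{\rm err}$.

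Your symmetrization step is sound. The feasible region (nonnegativity together with $\alpha_0+\alpha_1\le 1$, $\beta_0+\beta_1\le 1$) is convex and invariant under the swap $(\alpha_0,\alpha_1,\beta_0,\beta_1)\mapsto(\beta_1,\beta_0,\alpha_1,\alpha_0)$, and the objective is a sum of jointly concave geometric means; hence averaging any feasible point with its swapped image lands at the symmetric point $\alpha_0=\beta_1=q_c-q_{\rm err}$, $\alpha_1=\beta_0=q_{\rm err}$ with no loss in objective value. One small remark on phrasing: you write that establishing the bound is ``equivalent to showing that \dots the maximum \dots must still be at least $s$''; strictly speaking the implication you use is one-directional (the actual POVM gives a feasible point with value $\ge s$, hence $\max\ge s$, hence the quadratic inequality), but this is exactly the direction you need. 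The alternative calculus check you mention would have to handle two independent perturbation parameters rather than one, but since the concavity/swap argument already closes the gap this is moot.
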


Notice that for the marginal case where $q_{\rm{err}}=0$ we get that $q_c\leq 1-|\bracket{\psi_0}{\psi_1}|$~\cite{Ivanovic87,Dieks88,Peres88}, and for the 
marginal case where $q_c=1$ (no inconclusive answer is allowed) we get 
$q_{\rm{err}}\geq \frac{1}{2}-\frac{\sqrt{1-|\bracket{\psi_0}{\psi_1}|}}{2}$~\cite{Helstrom76}.

\myparagraph{Purification}
All security questions we ask are with respect to \emph{(quantum)
  honest-but-curious} adversaries. In the classical honest-but-curious
adversary model (HBC), the parties follow the instructions of a
protocol but store all information available to them. Quantum
honest-but-curious adversaries (QHBC), on the other hand, are allowed
to behave in an arbitrary way that cannot be distinguished from their
honest behavior by the other player.

Almost all impossibility results in quantum cryptography rely upon a
quantum honest-but-curious behavior of the adversary.  This behavior
consists in {\em purifying} all actions of the honest
players. Purifying means that instead of invoking classical randomness
from a random tape, for instance, the adversary relies upon quantum
registers holding all random bits needed. The operations to be
executed from the random outcome are then performed quantumly without
fixing the random outcomes.  For example, suppose a protocol instructs
a party to pick with probability $p$ state $\ket{\phi^0}_C$ and with
probability $1-p$ state $\ket{\phi^1}_C$ before sending it to the
other party through the quantum channel $C$. The purified version of
this instruction looks as follows: Prepare a quantum register in state
$\sqrt{p}\ket{0}_R+\sqrt{1-p}\ket{1}_R$ holding the random
process. Add a new register initially in state $\ket{0}_C$ before
applying the unitary transform $U:\ket{r}_R\ket{0}_C \mapsto
\ket{r}_R\ket{\phi^r}_C$ for $r\in\{0,1\}$ and send register $C$
through the quantum channel and keep register $R$.

From the receiver's point of view, the purified behavior is
indistinguishable from the one relying upon a classical source of
randomness because in both cases, the state of register $C$ is
$\rho=p\proj{\phi^0}+(1-p)\proj{\phi^1}$. All operations invoking
classical randomness can be purified similarly\cite{LC97,Mayers97}.
The result is that measurements are postponed as much as possible and
only extract information required to run the protocol in the sense
that only when both players need to know a random outcome, the
corresponding quantum register holding the random coin will be
measured.  If both players purify their actions then the joint state
at any point during the execution will remain in pure state, until the
very last step of the protocol when the outcomes are measured.

\myparagraph{Correct Two-Party Quantum Protocols and Their Embeddings}
In this section we define when a protocol \emph{correctly implements} a 
joint distribution $P_{X,Y}$ which may correspond to some standard
cryptographic task with uniformly random inputs. We call such a 
probability distribution \emph{primitive}. As an example of a 
primitive, we can take e.g. $P_{X,Y}$ such that for all $x_0,x_1,y,c\in\{0,1\}$, 
$P_{X,Y}(x_0,x_1,c,y)=1/8$ if and only if $y=x_c$. $P_{X,Y}$ then 
corresponds to a cryptographic task known as \emph{one-out-of-two oblivious 
transfer} (\ot), first introduced by Wiesner~\cite{Wiesner83}. 
It lets Alice send two bits ($x_0,x_1$) to Bob, of which he selects one ($x_c$) to receive. 
In the randomized version, we assume the inputs $x_0$, $x_1$, and $c$ to 
be chosen uniformly at random. For standard cryptographic primitives such as 
\ot, the version with inputs can be securely implemented from the randomized 
version~\cite{WW05b}. It follows that for such primitives, considering 
the randomized version is without loss of generality. 

As a result of purification of a protocol implementing primitive $P_{X,Y}$, 
up to the point when the final measurements take place, Alice and Bob obtain a shared pure state $\ket{\psi}$. 
Without loss of generality, we may assume that the final measurements yielding 
the implemented probability distribution are in the standard (computational) basis. 
Besides the registers $A$ and $B$ needed to compute $X$ and $Y$, the players could 
use auxiliary registers $A'$ and $B'$, yielding the final state $ket{\psi}$ to be 
in $\hil_{AA'}\otimes \hil_{BB'}$, where $\hil_{AA'}$ and $\hil_{BB'}$ denote 
the subsystems controlled by Alice and Bob, respectively. Informally, 
we call $\ket{\psi}$ an \emph{embedding} of $P_{X,Y}$, if the extra working 
registers $A'$ and $B'$ do not provide any extra information to the honest players, 
measuring their respective registers $A$ and $B$ in the computational bases. By ``extra information'' 
we mean additional information about the other party's output, not available to a player from the ideal
functionality for $P_{X,Y}$. A protocol whose purification produces an embedding of $P_{X,Y}$ as the final state  is then called \emph{correct} protocol for $P_{X,Y}$. Formally, we define an embedding of and a correct protocol for a given primitive as follows:

 \begin{definition}[\cite{SSS09}]
\label{defcorrect}
\label{correct}
A protocol $\pi$ for $P_{X,Y}$ is \emph{correct} if its final state
satisfies $S(X; YB') = S(XA'; Y) =I(X;Y)$ where $X$ and $Y$ are
Alice's and Bob's honest measurement outcomes in the computational
basis and $A'$ and $B'$ denote the extra working registers of Alice
and Bob. The state $\ket{\psi}\in \hil_{AB} \otimes \hil_{A'B'}$ is
called an \emph{embedding of $P_{X,Y}$} if it can be produced by the
purification of a correct protocol for $P_{X,Y}$.
\end{definition}

Correctness is a natural restriction imposed on two-party quantum protocols, 
since nothing can prevent honest players to perform any measurement they wish 
in the systems which are not needed to compute their desired outputs.
In the following, we also use the notion of \emph{regular} embedding which, 
as it turns out, simplifies the analysis of two-party quantum protocols.

\begin{definition}[\cite{SSS09}] 
\emph{Regular} embedding of $P_{X,Y}$ is an embedding where the auxiliary registers $A'$ and $B'$ 
are trivial.
\end{definition}
\cite{SSS09} shows that any embedding of $P_{X,Y}$ can be easily converted into its regular embedding by a measurement performed on either side.

\begin{lemma}[~\cite{SSS09}]
\label{symmetry}
Let $\ket{\psi}_{AA'BB'}$ be an embedding of $P_{X,Y}$. Then $\ket{\psi}$ is locally equivalent to 
a state $\ket{\psi^*}$ in the form:
$$\ket{\psi^*}=\sum_k \lambda_k\ket{k,k}_{A'B'}\ket{\psi_k}_{AB},$$ 
where $\lambda_k$ are all nonnegative real numbers and for each $k$, $\ket{\psi_k}$ is a regular 
embedding of $P_{X,Y}$. 
\end{lemma}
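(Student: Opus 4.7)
The plan is to pass to the measurement-conditioned decomposition of $\ket{\psi}$ and then exploit the correctness conditions $S(X;YB')=S(XA';Y)=I(X;Y)$ to expose a classically-correlated structure on $A'B'$. Write
\[
\ket{\psi}_{AA'BB'} \;=\; \sum_{x,y}\sqrt{P_{X,Y}(x,y)}\,\ket{x}_A\ket{y}_B\,\ket{\psi_{x,y}}_{A'B'},
\]
where $\ket{\psi_{x,y}}_{A'B'}$ is the normalized state on the auxiliary registers conditional on the honest computational-basis measurements giving outcome $(X,Y)=(x,y)$. The condition $S(X;YB')=I(X;Y)$ is equivalent to $I(X;B'\mid Y)=0$, which forces the reduced state $\tr_{A'}\proj{\psi_{x,y}}$ on $B'$ to depend only on $y$; call it $\rho^y_{B'}$. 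Symmetrically, $\tr_{B'}\proj{\psi_{x,y}}=\rho^x_{A'}$ depends only on $x$.

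Since $\ket{\psi_{x,y}}_{A'B'}$ is pure, its Schmidt coefficients are exactly the common nonzero spectrum of $\rho^x_{A'}$ and $\rho^y_{B'}$. Matching these across all $(x,y)$ in the support of $P_{X,Y}$ (and, if the bipartite support is disconnected, handling each connected component separately before reindexing) yields one common Schmidt spectrum $\{\lambda_k^2\}_k$ with $\sum_k\lambda_k^2=1$. Fix eigenbases $\{\ket{a^x_k}\}_k$ of $\rho^x_{A'}$ and $\{\ket{b^y_k}\}_k$ of $\rho^y_{B'}$; the Schmidt decomposition then reads
\[
\ket{\psi_{x,y}}_{A'B'}\;=\;\sum_k \lambda_k\,e^{i\theta^{x,y}_k}\,\ket{a^x_k}_{A'}\ket{b^y_k}_{B'}.
\]
Now apply the $A$-controlled local unitary $U_{AA'}\colon\ket{x}_A\ket{a^x_k}_{A'}\mapsto\ket{x}_A\ket{k}_{A'}$ and the analogous $B$-controlled $V_{BB'}$, absorbing the residual phases into one side. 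The resulting locally equivalent state is
\[
\ket{\psi^*}\;=\;\sum_k \lambda_k\,\ket{k,k}_{A'B'}\,\ket{\psi_k}_{AB}, \qquad \ket{\psi_k}_{AB}=\sum_{x,y}\sqrt{P_{X,Y}(x,y)}\,e^{i\phi^k_{x,y}}\ket{x}_A\ket{y}_B,
\]
and because $|\bracket{xy}{\psi_k}|^2=P_{X,Y}(x,y)$ while $\ket{\psi_k}$ carries no auxiliary registers, each $\ket{\psi_k}$ is a regular embedding of $P_{X,Y}$.

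The main obstacle is making a \emph{consistent} choice of Schmidt bases when the common spectrum $\{\lambda_k^2\}_k$ has degeneracies: inside each eigenspace the eigenbasis is not unique, and the families $\{\ket{a^x_k}\}_k$ and $\{\ket{b^y_k}\}_k$ must be chosen compatibly across all $(x,y)$ so that $U_{AA'}$ and $V_{BB'}$ are well-defined unitaries and so the phases $\theta^{x,y}_k$ can be uniformly absorbed. A minor secondary subtlety is the gluing step when $P_{X,Y}$ has disconnected bipartite support; this is handled by treating each connected component separately and then reindexing $k$ over the union of component-wise Schmidt indices.
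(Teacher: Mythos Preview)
The paper does not prove this lemma; it is imported from \cite{SSS09} and stated without argument, so there is no in-paper proof to compare your attempt against.

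On its own merits, your approach is the natural one and is presumably what \cite{SSS09} does: expand over the honest outcomes, use $S(X;YB')=S(XA';Y)=I(X;Y)$ to force the $A'$- and $B'$-marginals of $\ket{\psi_{x,y}}$ to depend on $x$ alone and on $y$ alone, Schmidt-decompose, and straighten with controlled unitaries. In the non-degenerate, connected-support case this goes through exactly as you write.

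However, the two issues you flag are genuine gaps rather than bookkeeping. In a degenerate Schmidt block of dimension $d$, the residual data in $\ket{\psi_{x,y}}$ is a unitary $M^{x,y}\in U(d)$ relating the $A'$- and $B'$-eigenbases, and your controlled unitaries can only effect $M^{x,y}\mapsto U^{x}M^{x,y}(V^{y})^{T}$. Making every $M^{x,y}$ diagonal simultaneously requires, after anchoring at a base point $(x_0,y_0)$, that the family $\{(M^{x,y_0})^{-1}M^{x,y}(M^{x_0,y})^{-1}\}_{x,y}$ be simultaneously diagonalizable. You have not argued this, and the marginal constraints (which in the fully degenerate block are vacuous, since all marginals are maximally mixed) do not obviously enforce it. For the disconnected-support case, ``handle each component separately and reindex'' does not deliver the stated conclusion: the states $\ket{\psi_k}$ produced within one component are supported only on that component and hence are \emph{not} regular embeddings of the full $P_{X,Y}$; moreover, distinct components can carry distinct Schmidt spectra, so there is no common list $\{\lambda_k\}$ to reindex into. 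Both points need an actual argument, or a precise citation to the relevant step in \cite{SSS09}, before the proof is complete.
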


It follows easily from the lemma above that Alice can convert $\ket{\psi}$ into a product state 
$\ket{\psi_k}_{AB}\otimes\ket{\varphi}_{B'}$ by a proper measurement in register $A'$. An analogous 
statement holds for Bob. 

Informally, an embedding $\ket{\psi}_{AA'BB'}$ of $P_{X,Y}$ is called \emph{trivial}, if it allows a dishonest
player to access at least the same amount of information as he/she is allowed in some classical implementation of 
$P_{X,Y}$. Formally, we define trivial and non-trivial embeddings of a given primitive as follows:

\begin{definition}[\cite{SSS09}]
Let $\ket{\psi}_{AA'BB'}$ be an embedding of $P_{X,Y}$. We call $\ket{\psi}$ a \emph{trivial} embedding of $P_{X,Y}$ if it satisfies $S(\dep{Y}{X}|AA')=0$ or $S(\dep{X}{Y}|BB')=0$. Otherwise, we call it \emph{non-trivial}. 
\end{definition} 

Notice that $P_{X,Y}$ can be implemented by the following classical protocol: 

\begin{enumerate}
\item Bob samples $x'=P_{Y|\dep{X}{Y}=x'}$ from the distribution $P_{\dep{X}{Y}}$ and sends it to Alice. He 
samples $y$ from the distribution $x'$.
\item Alice samples $x$ from the distribution $P_{X|\dep{X}{Y}=x'}$.
\end{enumerate}

Clearly, in the case where $S(\dep{X}{Y}|BB')=0$, $\ket{\psi}$ allows dishonest Bob and Alice to access at least 
as much information about the other party's outputs, as they can in the classical implementation above.  

\section{Non-Trivial Protocols and Composability}
\label{noncompose}

In the following we show that quantum protocols even characterized only by
the embeddings of the corresponding primitives (i.e. without considering
whether or not that state can be distributed fairly)
do not compose without allowing the adversary to
mount joint attacks that cannot be simulated by
attacks applied to individual copies. We are allowed 
to make this simplification because any attack of an embedding 
of a primitive can be modeled by an at least equally efficient 
(in terms of the amount of extra information accessible 
by a cheater) attack of the associated protocol. We define 
trivial protocols to be such that produce trivial embeddings.

\begin{definition}
A correct protocol for a primitive $P_{X,Y}$ is \emph{trivial}, if the embedding produced by such a protocol is trivial. Otherwise, it is called \emph{non-trivial}.
\end{definition}

In order to show 
non-composability of a non-trivial embedding $\ket{\psi}\in\hil_{ABA'B}$ of a primitive $P_{X,Y}$, satisfying t
$S_\psi(\dep{X}{Y}|BB')>0$ and $S_\psi(\dep{Y}{X}|AA')>0$, 
it is sufficient to show 
that no non-trivial regular embedding of $P_{X,Y}$  can be composed, for the following reason: Lemma~\ref{symmetry} shows that by measuring register $A'$ of $\ket{\psi}$, Alice  
converts $\ket{\psi}$ into $\ket{\psi_k}$ for some $k\in\{1,\dots,K\}$, which 
is a regular embedding of $P_{X,Y}$. If she performs such a measurement on many copies of $\ket{\psi}$, 
with high probability at least some constant fraction of them collapses into the same non-trivial regular embedding of $P_{X,Y}$. Non-composability of such a regular embedding then implies non-composability of embedding $\ket{\psi}$ of $P_{X,Y}$. 
The protocol composability questions 
can therefore be reduced to investigating composability of 
regular embeddings.  

In the following,
we formalize the weakness of non-composability 
inherent to any two-party
quantum protocol, preventing us from building
strong cryptographic primitives even from
non-trivial weak ones. This is in a sharp contrast
with quantum key distribution -- a three-party game that can be
shown to be universally composable~\cite{BHLMO05}.

Composability of quantum protocols has been studied
by Ben-Or and Mayers~\cite{BM02,BM04} and by Unruh~\cite{Unruh04}.
The former approach is an extension of Canetti's framework~\cite{Canetti01}
to the quantum case while the latter is an extension
of Backes, Pfitzmann, and Waidner~\cite{BPW04}.
We are going to consider a weaker version of
composability called {\em weak composability}
and show that almost no quantum protocol satisfies it.
Informally, we call a quantum two-party protocol weakly self-composable
if any adversarial strategy acting, possibly coherently, upon $n$ independent copies of the protocol
is equivalent to a strategy which acts individually upon each copy of the protocol.

\section{Ideal Functionalities}\label{ideal}

In order to guarantee composability, the functionality
of a quantum protocol should be modeled by some
classical ideal functionality.
An ideal functionality is a classical description of what
the protocol achieves independently of the environment
in which it is executed. If a protocol does not admit
such a description then it can clearly not be used in
any environment while keeping its functionality,
and such a protocol would not compose securely in all applications.

In the following, let $\hil_A$ and $\hil_B$ denote Alice's and Bob's quantum systems, respectively,
and let $\mathcal{X}$ and $\mathcal{Y}$ denote the set of classical outcomes of
Alice's and Bob's final measurements.

Intuitively, a pure state $\ket{\psi}\in\hil_A\otimes\hil_B$
implements the ideal functionality
\ID{\psi}\ if whatever the adversary
does on his/her part of $\ket{\psi}$, there exists
a classical input  to \ID{\psi}\
for the adversary that produces the same view.
The ideal functionality $\ID{\psi}$
accepts inputs for Alice 
and for Bob in $[0..1]$, where the elements of
$[0..1]$ encode all possible strategies
for both parties. When a party
inputs $0$ to \ID{\psi}, the outcome
of measuring this party's part of $\ket{\psi}$
in the computational basis, encoded by a number in $[0..1]$ 
is returned to the party.
This corresponds to the honest behavior. When
$m\in [0..1]$ is input to \ID{\psi}, a measurement
depending upon $m$ is applied to register $\hil_A$ (resp. $\hil_B$) of $\ket{\psi}$
and the classical outcome is returned to Alice (resp. Bob). Such a measurement acts 
only locally on the specified system.
Clearly, for $\ID{\psi}$ to be
of any cryptographic value, the set of possible
strategies should be small,
otherwise it would be very difficult to characterize
exactly what \ID{\psi}\ achieves.
As we are going to show next,
even if $\ket{\psi}$
implements such an \ID{\psi}\ where $[0..1]$ 
is used to encode
 all possible POVMs in $\hil_A$ and $\hil_B$ 
then all adversarial strategies
against $\ket{\psi}^{\otimes n}$
cannot be modeled by calls to $n$ copies of $\ID{\psi}$.

We write $\ID{\psi}(m,0)=(\tilde{w}, z)$ for the
ideal functionality corresponding to pure state
$\ket{\psi}\in \hil_A\otimes \hil_B$
with honest Bob and dishonest Alice using strategy
$m\in (0..1]$.
The output $\tilde{w}$ is provided to Alice
and $z\in [0..1]$ encoding an event in $\mathcal{Y}$ to Bob. Similarly,
we write $\ID{\psi}(0,m)=(z,\tilde{w})$
when Alice is honest and Bob is dishonest and
is using strategy $m\in (0..1]$.
Notice that an ideal functionality
for state $\ket{\psi}$ is easy to implement by letting
\ID{\psi}\  simulate Alice's and Bob's strategies
through a classical interface.

In general, \ID{\psi}\ returns one party's output as soon
as its strategy has been specified. The ideal functionality
never waits for both parties before returning the outcomes.
This models the fact that shared pure states never signal
from one party to the other.
The ideal functionality \ID{\psi}\ can be queried by one party
more than once with different strategies. The ideal
functionality keeps track of the residual state
after one strategy is applied. If a new strategy is
applied then it is applied to the residual state.
This feature captures the fact that the first measurement can be
applied before knowing how to refine it, which may happen
when Alice and Bob are involved in an interactive
protocol using only classical communication from
shared state $\ket{\psi}$.
Dishonest Alice may measure partially her part of $\ket{\psi}$
before announcing the outcome to Bob. Bob could then send
information to Alice allowing her to refine
her measurement of $\ket{\psi}$ dependently of what she received
from him. This procedure can be simulated using \ID{\psi}\ after specifying
a partial POVM for Alice's first measurement among the set
of POVMs encoded by the elements of $[0..1]$. Then, Alice
refines her first measurement by specifying
a new POVM represented by an element of  $[0..1]$ to the ideal 
functionality \ID{\psi}.

\section{Simulation}\label{simul}

A pure state $\ket{\psi}\in \hil_A\otimes\hil_B$ implements
the ideal functionality \ID{\psi}\ if any attack
implemented via POVM ${\cal M}$ by adversary Alice (resp. adversary
Bob) can be simulated by calling the ideal functionality with
some $m\in [0..1]$. The attack in the simulated
world calls \ID{\psi}\ only once as it is in the real case.
The ideal functionality \ID{\psi}\ therefore refuses to
answer more than one query per party. Remember also that
\ID{\psi}\ returns the outcome to one party as soon as
the party's strategy is specified irrespectively of whether the other
party has  specified its own.

\newcommand{\IDAM}{\ID{\mbox{\tiny{\sc amb}}}}
\newcommand{\IDOT}{\ID{\mbox{\tiny{\sc ot}}}}
\newcommand{\IDEPR}{\ID{\mbox{\tiny{\sc epr}}}}

First, let us show on an example what do we mean by simulation of an attack using the calls to
the ideal functionality.

\begin{example}\label{epr}
Consider that Alice and Bob are sharing $\ket{\Psi^{+}}=\frac{1}{\sqrt{2}}(\ket{00}+\ket{11})$
which is an embedding of the joint probability distribution $P_{X,Y}$ with $P_{X,Y}(0,0)=P_{X,Y}(1,1)=1/2$.
Alice's and Bob's honest measurement happen to be in the Schmidt basis.
We can define the ideal functionality \IDEPR\ as follows:
\[ \IDEPR(0,0) = (x,x) \mbox{ with prob. $\frac{1}{2}$.}
\]
Since both players are measuring in the Schmidt basis, it follows that
\IDEPR\ models any adversarial behavior.
\IDEPR\ is an ideal functionality for $\ket{\Psi^{+}}$
even in a context where it is a part of a larger
system.
However, $\ket{\Psi^{+}}$
is a trivial embedding!
\end{example}

Notice that any strategy against $\ket{\Psi^+}^{\otimes m}$
can be simulated by appropriate calls to $m$ copies of $\IDEPR$.
In other words, $\ket{\psi^{+}}$ is self-composable in a weak sense.
In the following section we show that in fact, all weakly self-composable
regular embeddings of joint probability distributions are trivial.

\section{Self-Composability of Embeddings}\label{composition}

We define the {\em classical weak self-composability} of a regular embedding
$\ket{\psi}\in \hil_A\otimes\hil_B$ of a joint probability distribution $P_{X,Y}$
as its ability
to be composed with itself without allowing the adversary to
get information about $X$ resp. $Y$ that is not available
through calls to independent copies of \ID{\psi}.

\begin{definition}\label{wsc}
Embedding $\ket{\psi}$ of $P_{X,Y}$ is {\em weakly self-composable} if there exists
an ideal functionality \ID{\psi}\ such that all attacks against $\ket{\psi}^{\otimes m}$
for any $m>0$ can be simulated by appropriate calls to $m$ ideal functionalities
\ID{\psi}.
\end{definition}

Next, we show that only (not necessarily all) 
trivial regular embeddings can be weakly self-composed.
The idea behind this result is the definition of a  protocol computing a function,
between Alice and Bob sharing $\ket{\psi}^{\otimes m}$ such that Bob can make the expected value of the function  strictly larger provided he has the capabilities to measure
his part of $\ket{\psi}^{\otimes m}$ coherently rather than individually. Only
individual measurements can be performed by Bob if $\ID{\psi}$ is modelling
the behavior of $\ket{\psi}$ in any situation.
Consider
that Alice and Bob are sharing a non-trivial regular embedding $\ket{\psi}$ of $P_{X,Y}$
that can be written as:
\begin{equation}
\ket{\psi} = \sum_{x\in \mathcal{X}} \sqrt{P_X(x)} \ket{x}^{A}\ket{\psi_x}^{B}. 
\end{equation}
We show in Lemma~\ref{xx} that $\ket{\psi}$ being
non-trivial (i.e. $S(X\searrow Y | \rho_B)>0$ ) implies
existence of $x_0\neq x_1\in \mathcal{X}$ such that
\begin{equation}\label{ind}
 0 < |\bracket{\psi_{x_0}}{\psi_{x_1}}|^2 < 1.
\end{equation}
Protocol~\ref{fig:challenge} challenges Bob to {\em identify}
in some sense
the state of two positions chosen uniformly and at random
among the following possibilities: 

$\{\ket{\psi_{x_0}}\ket{\psi_{x_0}}, \ket{\psi_{x_0}}\ket{\psi_{x_1}}, 
 \ket{\psi_{x_1}}\ket{\psi_{x_0}}, \ket{\psi_{x_1}}\ket{\psi_{x_1}}
\}$.
We will show that Bob, restricted to interact with his subsystem through the ideal
functionality $\ID{\psi}$, cannot make the expected value of a certain function as large as when it is allowed to interact
unconditionally (i.e. {\em coherently} )
with his subsystem.
We now prove that such $x_0,x_1\in\mathcal{X}$ exist
for any non-trivial regular embedding.
\begin{lemma}\label{xx}
If $\ket{\psi}\in \hil_A \otimes \hil_B$ is a non-trivial regular embedding of $P_{X,Y}$ then there exist
$x_0,x_1\in \mathcal{X}$ such that $\ket{\psi_{x_0}}$
and $\ket{\psi_{x_1}}$ satisfy
\[0< |\bracket{\psi_{x_0}}{\psi_{x_1}}| <1.
\]
\end{lemma}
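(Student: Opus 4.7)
My plan is to argue by contrapositive: I suppose that for every pair $x_0, x_1 \in \mathcal{X}$ we have $|\bracket{\psi_{x_0}}{\psi_{x_1}}| \in \{0, 1\}$, and I derive that $\ket{\psi}$ must be a trivial embedding, namely $S(\dep{X}{Y}|B)=0$, contradicting the hypothesis that $\ket{\psi}$ is non-trivial.

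Under this assumption, the binary relation on $\mathcal{X}$ defined by $x_0 \sim x_1$ iff $|\bracket{\psi_{x_0}}{\psi_{x_1}}|=1$ is an equivalence relation: it is clearly reflexive and symmetric, and transitive because $|\bracket{\psi_a}{\psi_b}|=1$ forces $\ket{\psi_b}=e^{i\theta}\ket{\psi_a}$ for some phase, so two such equalities compose into one. Let $f:\mathcal{X}\to\mathcal{C}$ be the quotient map onto the set of equivalence classes, and for each class $c\in\mathcal{C}$ fix a representative $\ket{\varphi_c}$, so that $\ket{\psi_x}=e^{i\theta_x}\ket{\varphi_{f(x)}}$ for some phase $\theta_x$. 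By assumption, the states $\{\ket{\varphi_c}\}_{c\in\mathcal{C}}$ are pairwise orthogonal.

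I then extract two consequences. First, Bob can recover $f(X)$ deterministically from his register: the projective measurement with one element $\proj{\varphi_c}$ per class (completed arbitrarily outside the span) acts on $\rho_B=\sum_x P_X(x)\proj{\psi_x}=\sum_c \Pr[f(X)=c]\,\proj{\varphi_c}$ and returns $c$ with probability exactly $\Pr[f(X)=c]$. Second, the honest Bob's conditional distribution is $P_{Y|X=x}(y)=|\bracket{y}{\psi_x}|^2=|\bracket{y}{\varphi_{f(x)}}|^2$, where the within-class phase cancels under the modulus, so $P_{Y|X=x}$ depends on $x$ only through $f(x)$. By Definition~\ref{dep_part} and the minimality property of the dependent part, $\dep{X}{Y}=f_X(X)$ is therefore a deterministic function of $f(X)$.

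Combining these two observations, $\dep{X}{Y}$ is obtainable from register $B$ by first performing the projective measurement $\{\proj{\varphi_c}\}_c$ and then classically post-processing its outcome. Hence $S(\dep{X}{Y}|B)=0$, contradicting the non-triviality of $\ket{\psi}$ and completing the proof. The part I expect to demand the most care is the handling of the within-class phases $\theta_x$: they affect neither Bob's measurement statistics nor $\rho_B$, but one must track them carefully to justify both that the projective measurement above is well-defined on the support of $\rho_B$ and that $P_{Y|X=x}$ truly factors through $f(x)$; passing everything through a single fixed representative $\ket{\varphi_c}$ per class is the cleanest way I see to avoid any ambiguity.
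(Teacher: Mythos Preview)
Your proof is correct and follows essentially the same approach as the paper's: argue by contrapositive, group the $\ket{\psi_x}$ into equivalence classes of states equal up to phase, observe that under the negated hypothesis the class representatives are pairwise orthogonal, and conclude that Bob can projectively determine the class and hence $\dep{X}{Y}$, contradicting non-triviality. Your version is more explicit than the paper's in tracking the phases $\theta_x$ and in spelling out, via the definition of the dependent part, why learning the class suffices to learn $\dep{X}{Y}$; the paper leaves both of these steps implicit.
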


\begin{proof}
Let us write $\ket{\psi}$ as,
\begin{equation}\label{rew1}
 \ket{\psi} = \sum_{x\in\mathcal{X}} \sqrt{P_X(x)} \ket{x}^A\ket{\psi_x}^B.
\end{equation}
Let $\{\ket{\psi^*_1},\ldots,\ket{\psi^*_\ell}\}\subseteq\{\ket{\psi_x}\}_{x\in\mathcal{X}}$
be the set of different states $\ket{\psi_x}$ available to Bob when Alice
measures $X$. Equation (\ref{rew1})
can be re-written as,
\begin{equation}\label{rew2}
\ket{\psi} = \sum_{j=1}^{\ell} \left(
\sum_{\stackrel{x\in\mathcal{X}:}{\ket{\psi_x}=\ket{\psi^*_j}}} e^{i\theta(x)}\sqrt{P_{X}(x)} \ket{x}\right) \otimes
  \ket{\psi^*_j},
\end{equation}
for some $\theta(x)\in [0\ldots 2\pi)$.

If $\{\ket{\psi^*_j}\}_{j=1}^\ell$ are mutually orthogonal then if Bob measures in this
basis no uncertainty about $X\searrow Y$ is left
contradicting the fact that $S(X\searrow Y|\rho_B)>0$.
\end{proof}

In Protocol~\ref{fig:challenge} Alice asks Bob to compare the two pure states on his side. In the next section we define a game related to the state
comparison problem and show that there is a coherent strategy which in this
game can succeed strictly better than any separable one, and therefore also
LOCC strategy on Bob's registers.

\newcommand{\challenge}{{\sc challenge}}
\begin{myfigure}{ht}
\begin{myprotocol}{\challenge}
\item Let $p\assign0$ and let Alice and Bob both know $x_0,x_1\in \mathcal{X}$ such that $0<|\bracket{\psi_{x_0}}{\psi_{x_1}}|=\tau<1$ is satisfied.
\item Alice gets $X^m = X_1,\ldots,X_m$ by measuring her part in all $m$ copies of $\ket{\psi}$ in the computational
basis.
       She identifies $4$ positions $1\leq i\neq i',j\neq j'\leq m$ such that
       $X_i=X_{i'}=x_0$ and $X_j=X_{j'}=x_1$. If such four positions do
       not exist then Alice announces to Bob that $p=0$ and aborts.
\item Alice picks $(h,h') \in \{i,i',j,j'\}$ with $h\neq h'$ such that $(X_h,X_{h'})=(\alpha,\beta)$ with probability $1/4$ for any choice of $\alpha,\beta\in\{x_0,x_1\}$ and announces $(h,h')$
       to Bob.
\item  Bob sends $b\in\{0,1,?\}$ to Alice, guessing whether the pair of pure states on the positions $h,h'$ is one of $A_0\assign\{\ket{\psi_{x_0}}\ket{\psi_{x_0}},\ket{\psi_{x_1}}\ket{\psi_{x_1}}\}$, ${\sf A}_1\assign\{\ket{\psi_{x_0}}\ket{\psi_{x_1}}, \ket{\psi_{x_1}}\ket{\psi_{x_0}}\}$, or responds by ``don't know''.
\item Alice sets the payoff value $p$: $p\assign-c$ if Bob responded incorrectly, $p\assign0$ if he answered ``don't know'', and $p\assign1$ if he answered the challenge correctly.
\end{myprotocol}
\caption{A state comparison challenge to Bob.}\label{fig:challenge}
\end{myfigure}

\section{State-Comparison Game with a Separably Inapproximable Coherent Strategy }
\label{comparison}

Consider the challenge from Protocol~\ref{fig:challenge}. In the game defined by this protocol, 
Alice lets Bob compare two states defined by a non-trivial regular embedding of a given primitive, which are either identical or different, but not orthogonal. Bob is allowed to response inconclusively however, for such 
an answer he obtains 0 points. On the other hand, if his guess is right, he obtains 1 point and if it is wrong, 
he obtains $-c$ points for some positive number $c$ which we determine later. We call his score \emph{payoff}.
With respect to the game defined by Protocol~\ref{fig:challenge}, let the maximal achievable expected payoff over the set of all measurement strategies be denoted by $p_{\max}$.
In this section we show that there exists $c$ such that the maximal average payoff $p_{\max}$
can be only achieved with a strategy coherent on the registers corresponding to the two factors of Bob's
product state. Furthermore, we show that for such a $c$ there is a constant gap between the maximal
payoff achievable with a separable strategy and $p_{\max}$. Separable measurements on a quantum
system consisting of two subsystems are such that any of their elements $M$ is in the form
$M=\sum_{i,j} F_i^0\otimes F_j^1$, where $F_i^0, F_j^1$ are the operators acting on the
respective subsystems of the given system.  According to~\cite{BDF99}, separable measurements
form a strict superset of  all LOCC measurements.

It is shown in~\cite{KKB05} that for $0<\tau<1$, the optimal no-error measurement is always coherent.
Furthermore, they prove that the highest success rate achievable by a separable unambiguous measurement
is $(1-\tau)^2$ whereas the optimal measurement has a success rate $(1-\tau)$.

Fix the value of $0<\tau<1$. For $c$ sufficiently large the best coherent strategy is to apply the best unambiguous measurement with the correct-answer rate $1-\tau$,
and to output \emph{don't know} for an uncertain result. Therefore, for some $c$ we have
$p_{\max}=1-\tau$. Let $p_s$ denote the supremum of average payoffs in the game from  Protocol~\ref{fig:challenge} achievable by separable strategies.

\begin{theorem}
\label{game}
In the game from Protocol~\ref{fig:challenge} there exists $c>0$ such that $p_s\leq p_{\max}-f(\tau)$, where $f(\tau)>0$ whenever
$0<\tau<1$.
\end{theorem}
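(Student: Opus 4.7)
The plan is to extract the gap directly from the \cite{KKB05} dichotomy: any unambiguous separable measurement distinguishing $A_0$ from $A_1$ succeeds with probability at most $(1-\tau)^2$, whereas the optimal coherent unambiguous measurement achieves $1-\tau$. I would first parametrise every Bob strategy (coherent or separable) by its conclusive probability $q_c$ and its error probability $q_{\mathrm{err}}$, so that the expected payoff in Protocol~\ref{fig:challenge} equals $q_c - (1+c)\,q_{\mathrm{err}}$. Write $p_{\max}(c)$, resp.\ $p_s(c)$, for the supremum of this quantity over all POVMs, resp.\ over all separable POVMs, on Bob's two registers.

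For the coherent side, applying the optimal coherent no-error measurement of \cite{KKB05} gives $(q_c,q_{\mathrm{err}})=(1-\tau,0)$, hence $p_{\max}(c)\ge 1-\tau$ for every $c>0$. The bulk of the work is showing $\lim_{c\to\infty}p_s(c)=(1-\tau)^2$. The set of separable POVMs on Bob's finite-dimensional two-register space is compact, and the maps $M\mapsto q_c(M),\ M\mapsto q_{\mathrm{err}}(M)$ are continuous. Pick $c_n\to\infty$ and near-optimal separable $M_n$ with payoff at least $p_s(c_n)-1/n$. Since $p_s(c_n)\ge (1-\tau)^2$ (achieved by the separable unambiguous measurement of \cite{KKB05}), the payoff inequality forces $(1+c_n)\,q_{\mathrm{err}}(M_n)\le 1-(1-\tau)^2+1/n$, so $q_{\mathrm{err}}(M_n)\to 0$. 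By compactness, extract a convergent subsequence $M_n\to M^*$ in the separable set; by continuity $q_{\mathrm{err}}(M^*)=0$, so $M^*$ is a separable unambiguous measurement, and \cite{KKB05} gives $q_c(M^*)\le (1-\tau)^2$. Passing to the limit, $\limsup p_s(c_n)\le q_c(M^*)\le (1-\tau)^2$, and the matching lower bound is trivial.

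Once both asymptotics are in hand, the strict inequality $(1-\tau)-(1-\tau)^2=\tau(1-\tau)>0$ supplies the desired gap. Choose $c$ large enough that $p_s(c)\le (1-\tau)^2+\tfrac{1}{2}\tau(1-\tau)$; combined with $p_{\max}(c)\ge 1-\tau$ this yields $p_s\le p_{\max}-f(\tau)$ with $f(\tau):=\tfrac{1}{2}\tau(1-\tau)$, which is strictly positive on $(0,1)$. The main obstacle I expect is the compactness/continuity step: one has to verify that separability is preserved under the natural topology on finite-dimensional POVMs so that the limit $M^*$ is again separable, and to check that the prior distribution on the four input states produced by step~3 of Protocol~\ref{fig:challenge} is exactly the uniform one under which \cite{KKB05} establish the $(1-\tau)^2$ separable bound (otherwise the KKB05 bound would need to be re-derived for the induced prior).
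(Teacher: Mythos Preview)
Your approach is correct but genuinely different from the paper's. The paper proceeds by explicit case analysis: it partitions separable strategies into three regimes according to their error probability $q_{\rm err}$, and in the delicate small-error regime invokes a technical approximation lemma (Lemma~\ref{sepapprox} in the appendix) to reduce any separable strategy to one of product form $E'_0=G^0_0\otimes G^1_0+G^0_1\otimes G^1_1$, etc., which is then bounded directly via Lemma~\ref{cb98}. This yields the explicit value $f(\tau)=\tau(1-\tau)/10$ together with, in principle, constructive control over how large $c$ must be.

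Your compactness route is shorter and delivers the better constant $f(\tau)=\tau(1-\tau)/2$, at the price of being non-constructive in $c$. Both obstacles you flag are real but routinely resolved: in finite dimensions the cone of separable positive operators is closed (it is the conic hull of the compact set of pure product states, which does not contain the origin), so the set of separable three-outcome POVMs---a closed subset of the compact set of all POVMs---is itself compact, and $q_c,\,q_{\rm err}$ are affine hence continuous in the POVM elements. Step~3 of the protocol is designed precisely so that the prior on the four product states is uniform, matching the hypothesis under which \cite{KKB05} proves the $(1-\tau)^2$ separable bound. One point worth making explicit in your write-up: $p_s(c)$ is non-increasing in $c$ (raising the penalty cannot increase any fixed strategy's payoff), so the subsequential limit you extract is automatically the full limit $\lim_{c\to\infty}p_s(c)$, and your $\limsup$ conclusion upgrades to the stated equality without further argument.
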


Before proving the actual theorem, we introduce a useful lemma.

\begin{lemma}
\label{cb98}
Let $\ket{\varphi_0},\ket{\varphi_1}\in\mathcal{H}$ be pure states such that 
$|\bracket{\varphi_0}{\varphi_1}|=\tau$. For a discrimination strategy 
$\mathcal{S}$ with three possible outcomes $0$, $1$, and $``don't\ know"$, 
let $q_c$ denote the probability of a conclusive answer and $q_{\rm{err}}$ the probability of a wrong answer. Then,
$$q_c\leq 2q_{\rm{err}}+1-\tau+2\sqrt{q_{\rm{err}}(1-\tau)}.$$
\end{lemma}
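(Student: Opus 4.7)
The plan is to derive Lemma~\ref{cb98} as an algebraic consequence of Lemma~\ref{cb98a}, by rearranging the inequality for $q_{\rm err}$ into an inequality for $q_c$. Setting $\tau = |\bracket{\varphi_0}{\varphi_1}|$, Lemma~\ref{cb98a} gives
\[
2q_{\rm err} \;\geq\; q_c - \sqrt{q_c^2 - (q_c - 1 + \tau)^2},
\]
so
\[
\sqrt{q_c^2 - (q_c - 1 + \tau)^2} \;\geq\; q_c - 2q_{\rm err}.
\]
The target inequality $q_c \leq 2q_{\rm err} + (1-\tau) + 2\sqrt{q_{\rm err}(1-\tau)}$ is trivially true whenever $q_c \leq 2q_{\rm err}$ or whenever $q_c \leq 1 - \tau$ (since the right-hand side is then at least $\max(2q_{\rm err},\,1-\tau)$), so I may assume $q_c > 2q_{\rm err}$ and $q_c \geq 1 - \tau$. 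Under these assumptions the right-hand side above is nonnegative and the radicand is nonnegative, so I can square without losing information.

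First I would simplify the radicand: writing $s \assign 1 - \tau$, one has
\[
q_c^2 - (q_c - s)^2 \;=\; s(2q_c - s).
\]
Squaring the rearranged bound then yields
\[
(q_c - 2q_{\rm err})^2 \;\leq\; s(2q_c - s),
\]
which, after expanding and grouping, becomes the quadratic inequality in $q_c$
\[
q_c^2 - 2(2q_{\rm err} + s)\,q_c + \bigl(4 q_{\rm err}^2 + s^2\bigr) \;\leq\; 0.
\]

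Solving this quadratic in $q_c$ (the discriminant is $(2q_{\rm err}+s)^2 - (4q_{\rm err}^2 + s^2) = 4 q_{\rm err} s$, which is nonnegative) gives
\[
q_c \;\leq\; (2q_{\rm err} + s) + \sqrt{4 q_{\rm err} s} \;=\; 2q_{\rm err} + (1-\tau) + 2\sqrt{q_{\rm err}(1-\tau)},
\]
which is the desired bound. The only non-routine step is to verify that the boundary regimes $q_c \leq 2q_{\rm err}$ and $q_c < 1-\tau$ (where squaring would be illegitimate or where the radicand in Lemma~\ref{cb98a} is not manifestly nonnegative) are handled by the trivial comparison with the right-hand side; the rest is elementary algebra, so I do not expect any substantive obstacle.
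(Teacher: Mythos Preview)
Your proof is correct and follows essentially the same route as the paper: both start from Lemma~\ref{cb98a}, rearrange to isolate the square root, square, and solve the resulting quadratic in $q_c$. Your version is in fact slightly more careful---you explicitly dispose of the boundary cases $q_c\le 2q_{\rm err}$ and $q_c<1-\tau$ before squaring, which the paper omits, and your constant term $4q_{\rm err}^2+s^2$ is correct (the paper has a typo $2q_{\rm err}^2$ in the intermediate quadratic, though its final roots are right).
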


\begin{proof}
According to Lemma~\ref{cb98a},
$$q_{\rm{err}}\geq \frac{1}{2}\left(q_c-\sqrt{q_c^2-(q_c-(1-\cos\theta))^2}\right).$$
Equivalently, we get:
$$\sqrt{q_c^2-(q_c-(1-\cos\theta))^2}\geq q_c-2q_{\rm{err}}.$$
By squaring both sides of the inequality we obtain:
\begin{eqnarray}
2q_c(1-\cos\theta)-(1-\cos\theta)^2&\geq& q_c^2+4q^2_{\rm{err}}-4q_cq_{\rm{err}}\nonumber\\
q_c^2-q_c(4q_{\rm{err}}+2(1-\tau))+(1-\tau)^2+2q^2_{\rm{err}}&\leq& 0\label{qcineq}.
\end{eqnarray}
By solving the quadratic equation
$$q_c^2-q_c(4q_{\rm{err}}+2(1-\tau))+(1-\tau)^2+2q^2_{\rm{err}}= 0,$$
we get the solutions $2q_{\rm{err}}+1-\tau\pm 2\sqrt{q_{\rm{err}}(1-\tau)},$ implying the solutions of~(\ref{qcineq}) to be 
$$q_c\leq 2q_{\rm{err}}+1-\tau+2\sqrt{q_{\rm{err}}(1-\tau)}.$$
\end{proof}

\begin{proof}[\ Theorem\ref{game}]
The method we use is the following: For given parameters $\tau,c\in\mathbb{R}$ such that $0<\tau<1$ and $c>0$, and an additional parameter
$k>0$, we divide the set of all separable measurements into three subsets according to the
probability $q_{\rm err}$ of Bob's incorrect (conclusive) answer in the state-comparison, expressed as a function of $c$, $k$,
and $\tau$. We construct an upper bound on $p_s$ in each of the three sets separately
and dependently on $c$, $k$, and $\tau$. Finally, we find the conditions for $c$ and $k$ such
that in all three sets we get $p_s\leq p_{\max}-f(\tau)$ for some $f(\tau)>0$.

\cite{KKB05} shows that the best separable unambiguous strategy for solving the 2-out-of-2 state
comparison problem is applying the best unambiguous measurements on each part of Bob's register
independently. Lemma~\ref{sepapprox} (see Appendix~\ref{app:sep_approx}) says that the payoff achieved by such a strategy in the case where
probability $q_{\rm err}$ is small, is close to the optimal payoff.
The analysis of such a situation is captured in the first of the three cases,
where we consider the separable measurements with $q_{\rm err}\leq \frac{1}{2k(c+1)}$. 

\textbf{1.} ($q_{\rm err}\leq \frac{1}{2k(c+1)}$) Lemma~\ref{sepapprox} shows that to any separable measurement $\mathcal{M}=(E_0,E_1,E_?)$
with probability of error $q_{\rm err}\leq \frac{1}{2(c+1)k}$ and the expected payoff $p$, there
exists a separable measurement $\mathcal{M}'=(E'_0,E'_1,E'_?)$ with the expected payoff $p'$,
satisfying $p\leq p'+\frac{1}{k}+O(1/\sqrt{c})$, such that its elements can be
written in the form:
$$E'_0= G^0_0\otimes G^1_0+G^0_1\otimes G^1_1,\ \ \ E'_1= G^0_0\otimes G^1_1+G^0_1\otimes G^1_0, \ \ \ 
E'_?= 1-E'_0-E'_1,$$
where the upper index of $G_\alpha^\beta$ refers to the subsystem and the lower index determines the guess of the state of the corresponding subsystem.

The upper bound on the value of $p'$ which we compute next, can then be used to upper bound $p$.
Consider an extended problem where Bob is supposed to identify each factor
of his product state (in contrast to just comparing the factors in the game). Let $q_{\rm err}^0$, $q_{\rm err}^1$, and $q_c^0$, $q_c^1$ denote the
probabilities of Bob's incorrect resp. conclusive answers in each of his subsystems. Then the
probability of comparing the states incorrectly can be expressed as follows:
$$q_{\rm err}=q_{\rm err}^0(q_c^1-q_{\rm err}^1)+q_{\rm err}^1(q_c^0-q_{\rm err}^1)
=q_c^1q_{\rm err}^0+q_c^1q_{\rm err}^1-2q_{\rm err}^0q_{\rm err}^1.$$
For separable strategies for which $q_c^1<1-\tau-2/k$ or $q_c^0<1-\tau-2/k$, we obtain
$p'<1-\tau-2/k$ and hence, $p<1-\tau-1/k+O(1/\sqrt{c})$ due to Lemma~\ref{sepapprox}.
For $c$ sufficiently large
we then get:
\begin{equation}
\label{1case}
p_{\max}-p\geq \frac{1}{2k}.
\end{equation}

Next, we discuss the case (not disjoint with the previous one) where both $q_c^0,q_c^1\geq 1-\tau-1/k=: \gamma$, which implies
that
\begin{equation}
\label{errlowbound}
q_{\rm err}\geq \gamma(q_{\rm err}^0+q_{\rm err}^1)-2q_{\rm err}^0q_{\rm err}^1.
\end{equation}
For upper bounding
the probability $q_c^0$ of a conclusive answer of the measurement $\mathcal{M}'$ we use Lemma~\ref{cb98} (an analogous formula holds for $q_c^1$):
$$q_c^0\leq 2q_{\rm err}^0+1-\tau+2\sqrt{q_{\rm err}^0(1-\tau)}.$$
The probability of correct state-identification in the first of Bob's subsystems then satisfies:
\begin{equation}
\label{okupbound}
q_c^0-q_{\rm err}^0\leq q_{\rm err}^0+1-\tau+2\sqrt{q_{\rm err}^0(1-\tau)}.
\end{equation}

Inequalities~(\ref{errlowbound}) and~(\ref{okupbound})
give us an upper bound on $p'$ for $c>9$:
\begin{eqnarray*}
p'&\leq &-cq_{\rm err}+q_{\rm err}^0q_{\rm err}^1\\
&+&
(q_{\rm err}^0+1-\tau+2\sqrt{q_{\rm err}^0(1-\tau)})
(q_{\rm err}^1+1-\tau+2\sqrt{q_{\rm err}^1(1-\tau)})\\
&\leq &-cq_{\rm err}+(1-\tau)^2+2(\sqrt{q_{\rm err}^0}+\sqrt{q_{\rm err}^1})+9q_{\rm err}\\
&\leq &(1-\tau)^2+\frac{4\sqrt{q_{\rm err}}}{\sqrt{\gamma}}\leq (1-\tau)^2+\frac{4}{\sqrt{2\gamma k(c+1)}}
\end{eqnarray*}
hence by Lemma~\ref{sepapprox}, $p\leq (1-\tau)^2+\frac{4}{\sqrt{2\gamma k(c+1)}}+\frac{1}{k}+O(1/\sqrt{c}).$
For $c$ sufficiently large we get:
\begin{equation}
\label{1caseb}
p\leq (1-\tau)^2+\frac{2}{k}.
\end{equation}

\textbf{2.} Second, we assume that $\frac{1}{2k(c+1)}< q_{\rm err}\leq \frac{1}{256(1-\tau)}$.
To upper bound the probability of comparing the states correctly, we use the same argument as in~(\ref{okupbound}) and get that:
$$q_c-q_{\rm err}\leq q_{\rm err}+1-\tau+2\sqrt{q_{\rm err}(1-\tau)},$$
where $q_c$ denotes the probability of a conclusive outcome.
This inequality implies the upper bound on $p$:
$$p\leq -cq_{\rm err}+(q_c-q_{\rm err})\leq -\frac{c-1}{c+1}\cdot\frac{1}{2k}+1-\tau+2\sqrt{q_{\rm err}(1-\tau)},$$
yielding that for $c$ sufficiently large,
\begin{equation}
\label{2case}
p\leq -\frac{1}{2k}+1-\tau+2\sqrt{q_{\rm err}(1-\tau)}.
\end{equation}
Consequently, we have three upper bounds on the value of $p$, given by~(\ref{1case}), (\ref{1caseb}), and (\ref{2case}): $B_0\assign 1-\tau-\frac{1}{2k}$, $B_1\assign (1-\tau)^2+\frac{2}{k}$, and $B_2\assign1-\tau+2\sqrt{q_{\rm err}(1-\tau)}-\frac{1}{2k}$. Since $B_2\geq B_0$, we only have to find $f(\tau)$ and $k$ such that
$B_1,B_2\leq (1-\tau)-f(\tau)$, or equivalently:
\begin{eqnarray*}
2f(\tau)+4\sqrt{q_{\rm err}(1-\tau)}&\leq &\frac{1}{k}\leq  \frac{\tau(1-\tau)-f(\tau)}{2}\\  
\frac{5}{2}f(\tau)&\leq &\frac{\tau(1-\tau)}{2}-4\sqrt{q_{\rm err}(1-\tau)}.
\end{eqnarray*}
It is easy to verify that for $d\leq \frac{1}{256(1-\tau)}$, the two inequalities are satisfied for $k\assign\frac{20}{9\tau(1-\tau)}$ and $f(\tau)\assign\frac{\tau(1-\tau)}{10}$. Thus, there exists $c>0$ such that in any separable strategy with the probability of error $q_{\rm err}\leq \frac{1}{256(1-\tau)}$ and the expected payoff $p$:
$$p\leq p_{\max}-\frac{\tau(1-\tau)}{10}.$$

\textbf{3.} For separable strategies with the probability of error $q_{\rm err}>\frac{1}{256(1-\tau)}$, we can simply set $c>256(1-\tau)$ which ensures that the payoff $p\leq 0$.

Set $c$ to be the maximum over the values required by the discussed subcases. For such a $c$ and any separable strategy, the corresponding expected payoff $p$ satisfies $p\leq p_{\max}-\frac{\tau(1-\tau)}{10}$, yielding that
$$p_{\max}-p_s\geq \frac{\tau(1-\tau)}{10}.$$
\end{proof}

\section{Only Trivial Embeddings Can Be Composed}
\label{final}
As a straightforward corollary of Theorem~\ref{game}, we now get that there exists a constant $c$ such that
any Bob restricted to interact with his system
through the ideal functionality $\ID{\psi}^{\otimes m}$ can never get the
expected value of $p$ as large and not even close as with the best coherent strategy.
This remains true for any possible description of the ideal
functionality since even if $\ID{\psi}$  allowed to specify an arbitrary
POVM then the ideal functionality would not be as good as the best
coherent strategy.

Notice that any strategy Bob may use for querying the ideal
functionality $\ID{\psi}$ for both systems involved in order to
pass the challenge with success, can also be carried by two
parties restricted to local quantum operation and classical
communication (LOCC). This is because $\ID{\psi}$
only returns classical information. Local quantum operations
can be performed by asking $\ID{\psi}$ to apply a POVM
to a local part of $\ket{\psi}$.

We now formally prove that non-trivial regular embeddings do not compose since Bob
can always succeed better in Protocol~\ref{fig:challenge} if
he could measure all his registers coherently.

\begin{theorem}\label{thsimul}
Only trivial regular embeddings of a primitive $P_{X,Y}$ are weakly self-composable.
\end{theorem}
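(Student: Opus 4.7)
The plan is to prove the contrapositive: any non-trivial regular embedding $\ket{\psi}$ of $P_{X,Y}$ fails to be weakly self-composable. I will run Protocol~\ref{fig:challenge} on $m$ independent copies of $\ket{\psi}$ and use Theorem~\ref{game} to exhibit a constant gap between the payoff of a coherent real-world adversary and that of any adversary restricted to $m$ local calls to $\ID{\psi}$. By Lemma~\ref{xx}, non-triviality yields $x_0 \neq x_1 \in \mathcal{X}$ with $0 < |\bracket{\psi_{x_0}}{\psi_{x_1}}| = \tau < 1$; fix the constant $c$ supplied by Theorem~\ref{game} so that $p_{\max} = 1-\tau$ and every separable strategy has expected payoff at most $p_{\max} - f(\tau)$ with $f(\tau) = \tau(1-\tau)/10 > 0$.

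First I would verify that the challenge is well-posed: since $P_X(x_0), P_X(x_1) > 0$, a Chernoff bound guarantees that for $m$ large enough Alice's outcomes include at least two $x_0$'s and two $x_1$'s with probability $1-\epsilon$, and $\epsilon$ can be chosen small enough that the resulting slack in the expected payoff does not erase the gap $f(\tau)$. Second, I would point out a coherent real-world attack that attains payoff $(1-\epsilon)\,p_{\max}$: Bob simply holds all $m$ of his registers coherently until Alice announces $(h,h')$, then applies the optimal unambiguous two-state comparison measurement from~\cite{KKB05} directly on those two registers, answering ``don't know'' on the inconclusive outcome.

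Now assume for contradiction that $\ket{\psi}$ is weakly self-composable via some $\ID{\psi}$. By Definition~\ref{wsc}, the coherent attack above must be simulated by an adversary whose only interaction with the shared states is through $m$ local calls to $\ID{\psi}$, one per copy. By the stipulation of Section~\ref{ideal}, each such call acts only on one party's share of a single copy of $\ket{\psi}$ and returns only a classical outcome. Therefore the composite action of the $m$ calls is LOCC across the $m$ subsystems, and in particular induces a LOCC (and hence separable) joint measurement on Bob's two challenge registers. Theorem~\ref{game} then bounds the attainable expected payoff by $p_{\max} - f(\tau)$. Choosing $\epsilon$ with $\epsilon\, p_{\max} < f(\tau)/2$ makes this strictly smaller than the coherent payoff $(1-\epsilon)\,p_{\max}$, contradicting weak self-composability.

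The main obstacle will be the clean reduction of the $m$-call strategy to a separable measurement on the two challenge registers. The subtlety is that Bob may adaptively intermix classical outcomes drawn from the other $m-2$ copies with his queries on the challenge pair; however, since the $m$ copies of $\ket{\psi}$ are in product form and each call to $\ID{\psi}$ acts locally on one copy, the outcomes from the non-challenge copies are classical side information drawn from distributions independent of the challenge pair, and therefore do not move Bob's effective joint measurement on the challenge registers outside the LOCC, and hence separable, class. Once this reduction is pinned down, the payoff gap from Theorem~\ref{game} immediately forces the contradiction and yields the theorem.
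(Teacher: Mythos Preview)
Your proposal is correct and follows essentially the same route as the paper: invoke Lemma~\ref{xx} to obtain $x_0,x_1$ with $0<\tau<1$, apply Theorem~\ref{game} to get a constant payoff gap between coherent and separable strategies in Protocol~\ref{fig:challenge}, and observe that any strategy built from calls to $\ID{\psi}$ is LOCC (hence separable) on the two challenge registers. You spell out two points the paper leaves implicit---the Chernoff-type control of the abort probability and the adaptive-side-information reduction to LOCC---but the skeleton is identical.
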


\begin{proof}
Let $\ket{\psi}=\sum_{x\in\mathcal{X}}\sqrt{P_X(x)}\ket{x}\ket{\psi_x}$ be a non-trivial regular embedding of $P_{X,Y}$.
According to Lemma~\ref{xx} there exist $x_0,x_1\in\mathcal{X}$ such that
$0<|\bracket{\psi_{x_0}}{\psi_{x_1}}|<1$. 
Theorem~\ref{game} then
implies that there is $c\in \mathbb{R}^+$ such that in Protocol~\ref{fig:challenge} played with $\ket{\psi_{x_0}}$ and $\ket{\psi_{x_1}}$
satisfying the condition above, the expected payoff achievable by the best coherent
strategy is strictly better than what can be achieved by separable i.e. also LOCC strategies. By definition
of weak self-composability it means that non-trivial regular embedding $\ket{\psi}$ of $P_{X,Y}$ is not
weakly self-composable.
\end{proof}

\begin{corollary}
Only trivial (correct) two-party quantum protocols are weakly self-composable.
\end{corollary}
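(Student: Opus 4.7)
The plan is to reduce the corollary to Theorem~\ref{thsimul} via the purification/embedding machinery already developed. Let $\pi$ be a non-trivial correct two-party quantum protocol for some primitive $P_{X,Y}$. By the definition given at the start of Section~\ref{noncompose}, its bipartite purification produces a non-trivial embedding $\ket{\psi}\in\hil_{AA'BB'}$ of $P_{X,Y}$, meaning that $S_\psi(\dep{X}{Y}|BB')>0$ or $S_\psi(\dep{Y}{X}|AA')>0$. By symmetry, I assume the first condition holds; the other case is handled identically with Alice's and Bob's roles swapped.

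Next, I would invoke Lemma~\ref{symmetry} to write $\ket{\psi}$ (up to local unitaries that are free under QHBC) as $\ket{\psi^*}=\sum_k \lambda_k\ket{k,k}_{A'B'}\ket{\psi_k}_{AB}$, where each $\ket{\psi_k}$ is a regular embedding of $P_{X,Y}$. Because the auxiliary register $B'$ is classically correlated with $A'$ in $\ket{\psi^*}$ and the $\ket{k}_{B'}$ are orthogonal, the assumption $S_\psi(\dep{X}{Y}|BB')>0$ forces at least one $\ket{\psi_k}$ (with nonzero weight) to be a non-trivial regular embedding; call it $\ket{\psi_{k^\star}}$.

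Now consider $m$ parallel executions of $\pi$, producing $\ket{\psi^*}^{\otimes m}$. A QHBC Alice first measures each copy of the $A'$ register in the computational basis. By a Chernoff bound, with probability tending to $1$ as $m\to\infty$, at least a $\tfrac{1}{2}\lambda_{k^\star}^2\cdot m$ fraction of the copies collapse jointly (on $AB$) to the non-trivial regular embedding $\ket{\psi_{k^\star}}$. Alice identifies these positions and now the global state on those positions is exactly $\ket{\psi_{k^\star}}^{\otimes M}$ with $M=\Theta(m)$. On this sub-block Alice now runs Protocol~\ref{fig:challenge} in her head, sending Bob the classical challenge; Bob's system on these positions is a tensor product of $M$ copies of his part of $\ket{\psi_{k^\star}}$. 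By Theorem~\ref{game} applied to $\ket{\psi_{k^\star}}$ (whose existence of suitable $x_0,x_1$ is guaranteed by Lemma~\ref{xx}), the best coherent strategy on Bob's side achieves payoff $p_{\max}$, strictly exceeding by $f(\tau)>0$ any payoff achievable by separable (hence LOCC) strategies. Since any purported ideal functionality $\ID{\pi}$ returns only classical information per invocation and, by construction, any single call acts locally on a single copy of $\pi$, any Bob-strategy built from $M$ calls to $\ID{\pi}$ is realizable by LOCC on the $M$ copies of his part of $\ket{\psi_{k^\star}}$. Thus no such ideal-functionality-based strategy can match the coherent payoff, contradicting weak self-composability of $\pi$.

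The main obstacle is the last reduction step: making precise that a Bob restricted to interacting with $\pi^{\otimes m}$ through per-copy calls to any conceivable $\ID{\pi}$ is at most as powerful as an LOCC adversary on the corresponding $m$ copies of the embedding $\ket{\psi_{k^\star}}$ (after Alice's $A'$-measurement). The key observation to push through, as already noted in Section~\ref{final}, is that $\ID{\pi}$ returns only classical outputs and each call is confined to a single copy, so any interactive composition of calls can be simulated by applying local POVMs on individual copies and exchanging classical messages; this is precisely LOCC, which is dominated by separable strategies. Once this is stated cleanly, the gap $f(\tau)$ from Theorem~\ref{game} propagates directly to a gap between the coherent payoff and any $\ID{\pi}$-based payoff in the composed execution, establishing the corollary.
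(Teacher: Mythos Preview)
Your proposal is correct and follows essentially the same route as the paper's own proof: reduce to a non-trivial regular embedding via Lemma~\ref{symmetry}, use concentration over many copies to isolate a block of identical $\ket{\psi_{k^\star}}$'s, run Protocol~\ref{fig:challenge} on that block, and invoke the coherent-versus-separable gap of Theorem~\ref{game} together with the observation that per-copy calls to any $\ID{\pi}$ realize at most an LOCC strategy. One small slip: non-triviality means \emph{both} $S(\dep{X}{Y}|BB')>0$ and $S(\dep{Y}{X}|AA')>0$ (not ``or''), so your symmetry case split is unnecessary but harmless.
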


\begin{proof}
The statement follows from the fact that any quantum honest-but-curious attack 
of an embedding can be modeled by an attack of the corresponding two-party protocol. 
Lemma~\ref{symmetry} shows that for any party 
there is a measurement converting a regular embedding $\ket{\psi}\in\hil_{ABA'B'}$ of a primitive 
$P_{X,Y}$ into an embedding $\ket{\psi_k}$ of $P_{X,Y}$ for some $k\in\{1,\dots,K\}$. 
The other party can also learn the index $k$ by measuring his/her additional register.
Non-composability of 
non-trivial quantum two-party protocols for $P_{X,Y}$ then follows from non-composability of 
non-trivial regular embeddings of $P_{X,Y}$ by including a pre-stage into the game from 
Protocol~\ref{fig:challenge}. In this stage, Alice and Bob convert each of the
many embeddings of $P_{X,Y}$ corresponding to the protocol copies into a regular embedding of $P_{X,Y}$
known to both parties.  
This conversion results into a non-trivial 
regular embedding of $P_{X,Y}$ with constant probability. This is because if all regular embeddings in 
the conversion-range were trivial, then the measurement converting the 
embedding into regular embeddings could be used as a part of a measurement 
revealing $\dep{X}{Y}$ completely to Bob, or revealing $\dep{Y}{X}$ completely 
to Alice. Hence, such an embedding and the corresponding protocol would then be trivial. 
Due to the law of 
large numbers, from several copies of 
an embedding Alice obtains at least some constant fraction of the same 
non-trivial regular embeddings except of probability negligible in the number of 
copies. Alice and Bob then play the game from Protocol~\ref{fig:challenge}, 
using the subset of copies where Alice obtained the same non-trivial 
regular embedding. 
\end{proof}

Finally, let us mention several facts related particularly to (non-)composability 
of trivial two-party quantum protocols implementing trivial primitives. Clearly, every trivial primitive 
has a protocol which is composable against quantum honest-but-curious 
adversaries, namely the classical one implementing the primitive 
securely in the HBC model. Formally, for a trivial $P_{X,Y}$
we show composability 
of quantum protocols implementing only  
$P_{\dep{X}{Y},\dep{Y}{X}}$ (which corresponds to secure implementation 
in the HBC model) instead of $P_{X,Y}$, where the desired 
distribution $P_{X,Y}$ is obtained from the implementation of 
$P_{\dep{X}{Y},\dep{Y}{X}}$ by local randomization. Since a trivial 
primitive satisfies $H(\dep{X}{Y}|\dep{Y}{X})=H(\dep{Y}{X}|\dep{X}{Y})=0$ 
or in other words, the  
implemented dependent parts are accessible to both parties already 
in one protocol copy, coherent attacks do not help in getting any 
more information. Because the rest of $X$ and $Y$ is computed 
purely locally, there is no attack, individual or coherent, revealing any information about 
the result of this operation.   

On the other hand, not all 
protocols for trivial primitives are composable. As an example let us take a protocol for a
primitive $P_{X,Y}$ defined by $P_{X,Y}(0,0)=P_{X,Y}(1,0)=3/8$, 
$P_{X,Y}(0,1)=P_{X,Y}(1,1)=1/8$, represented by the following 
regular embedding:
$$\ket{\psi}=\frac{1}{\sqrt{2}}\ket{0}\otimes \left(\frac{\sqrt{3}}{2}\ket{0}+\frac{1}{2}\ket{1}\right)+\frac{1}{\sqrt{2}}\ket{1}\otimes \left(\frac{\sqrt{3}}{2}\ket{0}-\frac{1}{2}\ket{1}\right).$$ 
Such an embedding (and therefore, the corresponding protocol) is trivial because it implements a trivial primitive. 
Formally,  $0=H(\dep{X}{Y}|Y)$ and  $H(\dep{X}{Y}|Y)\geq S(\dep{X}{Y}|B)$ 
imply that $S(\dep{X}{Y}|B)=0$.   
On the other hand, the states 
$$\ket{\psi_0}\assign \frac{\sqrt{3}}{2}\ket{0}+\frac{1}{2}\ket{1},\ \ \ \ 
\ket{\psi_1}\assign \frac{\sqrt{3}}{2}\ket{0}-\frac{1}{2}\ket{1}$$
that Bob gets for Alice's respective outcomes $0$ and $1$ of the measurement 
in the canonical basis, satisfy the condition $0<|\bracket{\psi_0}{\psi_1}|<1$ 
from Protocol~\ref{fig:challenge}. Hence, the arguments from the proof of 
Theorem~\ref{game} apply, yielding that $\ket{\psi}$ cannot be composed. 

\bibliographystyle{alpha}
\bibliography{crypto,qip,procs}

\newcommand{\etalchar}[1]{$^{#1}$}
\begin{thebibliography}{IMNW04}

\bibitem[BB84]{BB84}
Charles~H. Bennett and Gilles Brassard.
\newblock Quantum cryptography: Public key distribution and coin tossing.
\newblock In {\em Proceedings of IEEE International Conference on Computers,
  Systems, and Signal Processing}, pages 175--179, 1984.

\bibitem[BDF{\etalchar{+}}99]{BDF99}
Charles~H. Bennett, David~P. DiVincenzo, Christopher~A. Fuchs, Tal Mor, Eric
  Rains, Peter~W. Shor, John~A. Smolin, and William~K. Wootters.
\newblock Quantum nonlocality without entanglement.
\newblock {\em Physical Review A}, 59(2):1070--1091, February 1999.

\bibitem[BHL{\etalchar{+}}05]{BHLMO05}
Michael {Ben-Or}, Michal Horodecki, Debbie~W. Leung, Dominic Mayers, and
  Jonathan Oppenheim.
\newblock The universal composable security of quantum key distribution.
\newblock In {\em Theory of Cryptography Conference (TCC)\/} \cite{TCC05},
  pages 386--406.

\bibitem[BLM{\etalchar{+}}05]{BLMPPR05}
Jonathan Barrett, Noah Linden, Serge Massar, Stefan Pironio, Sandu Popescu, and
  David Roberts.
\newblock Nonlocal correlations as an information-theoretic resource.
\newblock {\em Physical Review A}, 71:022101, 2005.

\bibitem[BM02]{BM02}
Michael {Ben-Or} and Dominic Mayers.
\newblock Quantum universal composability, November 2002.
\newblock Presentation at "Quantum Information and Cryptography" Workshop,
  slides online available at
  \url{http://www.msri.org/publications/ln/msri/2002/quantumcrypto/mayers/1/me%
ta/aux/mayers.pdf}.

\bibitem[BM04]{BM04}
Michael {Ben-Or} and Dominic Mayers.
\newblock General security definition and composability for quantum and
  classical protocols, September 2004.
\newblock \url{http://arxive.org/abs/quant-ph/0409062}.

\bibitem[BPW04]{BPW04}
Michael Backes, Birgit Pfitzmann, and Michael Waidner.
\newblock Secure asynchronous reactive systems.
\newblock Cryptology ePrint Archive, March 2004.
\newblock Online available at \url{http://eprint.iacr.org/2004/082.ps}.

\bibitem[Can01]{Canetti01}
Ran Canetti.
\newblock Universally composable security: A new paradigm for cryptographic
  protocols.
\newblock In {\em 42nd Annual IEEE Symposium on Foundations of Computer Science
  (FOCS)}, pages 136--145, 2001.

\bibitem[CB98]{BC98}
Anthony Chefles and Stephen~M. Barnett.
\newblock Quantum state separation, unambiguous discrimination, and exact
  cloning.
\newblock {\em J. Phys. A}, 31(50):10097--10103, 1998.
\newblock \url{http://front.math.ucdavis.edu/9808.4018}.

\bibitem[Die88]{Dieks88}
Dennis Dieks.
\newblock Overlap and distinguishability of quantum states.
\newblock {\em Physical Letters A}, 126:303--307, 1988.

\bibitem[FWW04]{FWW04}
Matthias Fitzi, Stefan Wolf, and J\"{u}rg Wullschleger.
\newblock Pseudo-signatures, broadcast, and multi-party computation from
  correlated randomness.
\newblock In {\em Advances in Cryptology---CRYPTO~'04}, volume 3152 of {\em
  Lecture Notes in Computer Science}, pages 562--579. Springer, 2004.

\bibitem[Hel76]{Helstrom76}
Carl~W. Helstrom.
\newblock Quantum detection and estimation theory.
\newblock {\em Academic Press, New York}, 1976.

\bibitem[IMNW04]{IMNW04}
Hideki Imai, J\"orn {M\"uller-Quade}, Anderson Nascimento, and Andreas Winter.
\newblock Rates for bit commitment and coin tossing from noisy correlation.
\newblock In {\em Proceedings of 2004 IEEE International Symposium on
  Information Theory}, pages 47--47, June 2004.

\bibitem[Iva87]{Ivanovic87}
I.~D. Ivanovi\'{c}.
\newblock How to differentiate between non-orthogonal states.
\newblock {\em Physical Letters A}, 123:257--259, 1987.

\bibitem[Kit03]{Kitaev03}
A.~Kitaev.
\newblock Quantum coin-flipping.
\newblock presented at QIP'03. A review of this technique can be found in
  \url{http://lightlike.com/~carlosm/publ}, 2003.

\bibitem[KKB05]{KKB05}
Matthias Kleinmann, Hermann Kampermann, and Dagmar Bruss.
\newblock On the generalization of quantum state comparison.
\newblock {\em Phys. Rev. A}, 72(032308), 2005.
\newblock \url{http://arxiv.org/abs/quant-ph/0503012}.

\bibitem[LC97]{LC97}
Hoi-Kwong Lo and H.~F. Chau.
\newblock Is quantum bit commitment really possible?
\newblock In {\em Physical Review Letters\/} \cite{PRL78.17}, pages 3410--3413.

\bibitem[Lo97]{Lo97}
Hoi-Kwong Lo.
\newblock Insecurity of quantum secure computations.
\newblock {\em Physical Review A}, 56(2):1154--1162, 1997.

\bibitem[May97]{Mayers97}
Dominic Mayers.
\newblock Unconditionally secure quantum bit commitment is impossible.
\newblock In {\em Physical Review Letters\/} \cite{PRL78.17}, pages 3414--3417.

\bibitem[NC00]{NC00}
Michael~A. Nielsen and Isaac~L. Chuang.
\newblock {\em Quantum Computation and Quantum Information}.
\newblock Cambridge university press, 2000.

\bibitem[Per88]{Peres88}
How to differentiate between non-orthogonal states.
\newblock {\em Physical Letters A}, 128:19, 1988.

\bibitem[PR94]{PR94}
Sandu Popescu and Daniel Rohrlich.
\newblock Quantum nonlocality as an axiom.
\newblock {\em Foundations of Physics}, 24(3):379--385, 1994.

\bibitem[PRL97]{PRL78.17}
{\em Physical Review Letters}, volume~78, April 1997.

\bibitem[RK05]{RK05}
Renato Renner and Robert K\"onig.
\newblock Universally composable privacy amplification against quantum
  adversaries.
\newblock In {\em Theory of Cryptography Conference (TCC)\/} \cite{TCC05},
  pages 407--425.

\bibitem[SSS09]{SSS09}
Louis Salvail, Miroslava Sot\'{a}kov\'{a}, and Christian Schaffner.
\newblock On the power of two-party quantum cryptography.
\newblock http://arxiv.org/abs/0902.4036, 2009.

\bibitem[TCC05]{TCC05}
{\em Theory of Cryptography Conference (TCC)}, volume 3378 of {\em Lecture
  Notes in Computer Science}. Springer, 2005.

\bibitem[Unr04]{Unruh04}
Dominique Unruh.
\newblock Simulatable security for quantum protocols.
\newblock \url{http://arxiv.org/abs/quant-ph/0409125}, 2004.

\bibitem[Wie83]{Wiesner83}
Stephen Wiesner.
\newblock Conjugate coding.
\newblock {\em SIGACT News}, 15(1):78--88, 1983.
\newblock Original manuscript written circa 1970.

\bibitem[WW04]{WW04}
Stefan Wolf and J\"urg Wullschleger.
\newblock Zero-error information and applications in cryptography.
\newblock In {\em {IEEE} Information Theory Workshop (ITW)}, San Antonio,
  Texas, October 2004.

\bibitem[WW05a]{WW05a}
Stefan Wolf and J\"urg Wullschleger.
\newblock New monotones and lower bounds in unconditional two-party
  computation.
\newblock In {\em Advances in Cryptology---CRYPTO~'05}, volume 3621 of {\em
  Lecture Notes in Computer Science}, pages 467--477. Springer, 2005.

\bibitem[WW05b]{WW05b}
Stefan Wolf and J\"urg Wullschleger.
\newblock Oblivious transfer and quantum non-locality.
\newblock In {\em International Symposium on Information Theory (ISIT 2005)},
  pages 1745--1748, 2005.

\end{thebibliography}

\appendix
\section{Lemma~\ref{sepapprox} from the proof of Theorem~\ref{game}}
\label{app:sep_approx}
Before starting with the actual Lemma~\ref{sepapprox}, we formulate and prove
an auxiliary lemma, needed for the main proof. In the following, 
$\|T\|_\infty$ denotes the norm of an operator $T\in\mathbb{C}^{n\times n}$, which equals the operator's
largest singular value.

\begin{lemma}
\label{lemma_nonlocal}
Let $f:\mathbb{R}^+\rightarrow \mathbb{C}^{2\times 2}$  be a function mapping $c$ positive into
a positive-semidefinite operator $F_c\in\mathbb{C}^{2\times 2}$ such that $\|F_c\|_\infty=1$
and for some unit vector $\ket{v_0}\in \mathbb{C}^2$, $\bra{v_0}F_c\ket{v_0}\in O(1/c)$.
Then the dominant eigenvector of $F_c$ is of the form
$\gamma^c_0\ket{v_0}+\gamma^c_1\ket{v_1}$, where $\bracket{v_0}{v_1}=0$,
$|\gamma^c_0|^2+|\gamma^c_1|^2=1$, and $|\gamma^c_0|^2\in O(1/c)$. Furthermore, the second largest
eigenvalue $\lambda_c$ of $F_c$ satisfies $\lambda_c \in O(1/c)$.
\end{lemma}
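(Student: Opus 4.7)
\emph{Plan.} The plan is to fix the orthonormal basis $\{\ket{v_0},\ket{v_1}\}$ of $\mathbb{C}^2$ and represent $F_c$ as the matrix
$$F_c=\begin{pmatrix}a_c & b_c\\ \bar b_c & d_c\end{pmatrix},$$
where $a_c=\bra{v_0}F_c\ket{v_0}\in O(1/c)$ by hypothesis and $d_c=\bra{v_1}F_c\ket{v_1}$. Since $F_c$ is positive-semidefinite with $\|F_c\|_\infty=1$, its two eigenvalues are $1$ and $\lambda_c\in[0,1]$, and every diagonal entry in any orthonormal basis lies in $[0,1]$.

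Taking traces, $\lambda_c=a_c+d_c-1$, so the constraint $\lambda_c\geq 0$ forces $d_c\geq 1-a_c$, while $d_c\leq 1$ gives $\lambda_c\leq a_c\in O(1/c)$; this already establishes the second-eigenvalue claim. The identity $\det(F_c-I)=0$ (since $1$ is an eigenvalue) expands to $|b_c|^2=(1-a_c)(1-d_c)$, and combining this with the bound $1-d_c\leq a_c$ just derived yields $|b_c|^2\leq(1-a_c)\cdot a_c\leq a_c\in O(1/c)$.

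For the dominant eigenvector, I would solve $(F_c-I)\ket{u}=0$: the first row gives $\gamma_0^c=b_c\gamma_1^c/(1-a_c)$, so after normalization
$$|\gamma_0^c|^2=\frac{|b_c|^2}{(1-a_c)^2+|b_c|^2}\leq\frac{|b_c|^2}{(1-a_c)^2}.$$
For $c$ large enough that $a_c\leq 1/2$, this is bounded by $4|b_c|^2\in O(1/c)$, which closes the argument.

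I do not expect a real obstacle here beyond bookkeeping: the lemma reduces to a $2\times 2$ characteristic-polynomial computation plus the trace and diagonal bounds imposed by positive-semidefiniteness. The only care needed is to restrict to $c$ large enough that $a_c<1$, so that the eigenvalue $1$ is simple and the expressions above are non-degenerate; this is automatic from $a_c\in O(1/c)$ and harmless for an asymptotic conclusion.
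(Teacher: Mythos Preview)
Your argument is correct and considerably more direct than the paper's. The paper proceeds by writing $F_c=M_c^\dagger M_c$, setting $\ket{u_i}=M_c\ket{v_i}$ so that $\bracket{u_0}{u_0}\in O(1/c)$, expanding $1=\bra{w}F_c\ket{w}$ for the dominant eigenvector $\ket{w}=\gamma_0^c\ket{v_0}+\gamma_1^c\ket{v_1}$, and then running a contradiction argument on hypothetical growth rates $|\gamma_1^c|^2=1-\Theta(1/c^\delta)$ with $\tfrac12\le\delta<1$ to force $\delta=1$; the bound on $\lambda_c$ is obtained afterwards by evaluating $\bra{w^\perp}F_c\ket{w^\perp}$. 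By contrast, you exploit only the trace identity $a_c+d_c=1+\lambda_c$ (giving $\lambda_c\le a_c$ immediately from $d_c\le\|F_c\|_\infty=1$) and the characteristic equation $\det(F_c-I)=0$ (giving $|b_c|^2=(1-a_c)(1-d_c)\le a_c$), then read off $|\gamma_0^c|^2$ from the eigenvector equation. Your route avoids the square-root decomposition and the asymptotic case analysis entirely; what the paper's approach might buy is a template that could in principle scale beyond $2\times2$, but for the lemma as stated your elementary trace/determinant computation is both shorter and more transparent. The caveat you flag---restricting to $c$ large enough that $a_c<1$ so the top eigenvalue is simple---is exactly the right one and is harmless for an $O(1/c)$ conclusion.
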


\begin{proof}
Let us write $F_c$ in the form:
$F_c=M_c^\dag M_c$, for a matrix $M_c\in\mathbb{C}^{2\times 2}$.
This is possible due to the fact that $F_c$ is positive-semidefinite. We define
$\ket{u_0}\assign M_c\ket{v_0}$ and $\ket{u_1}\assign M_c\ket{v_1}$. According to
the assumption, $\bracket{u_0}{u_0}\in O(1/c)$. Let us write the (unit) dominant eigenvector
of $F_c$ in the basis $\{\ket{v_0},\ket{v_1}\}$ as:
$$\ket{w}=\gamma^c_0\ket{v_0}+\gamma^c_1\ket{v_1}.$$
It follows that
$$1=\bra{w}F_c\ket{w}=|\gamma^c_0|^2\bracket{u_0}{u_0}+|\gamma^c_1|^2\bracket{u_1}{u_1}+2{\rm Re}(\overline{\gamma^c_0}\gamma^c_1
\bracket{u_0}{u_1}).$$

Assume that there exists an unbounded increasing sequence of positive numbers such that for its elements $c$, we get
$|\gamma^c_1|^2=1-\Theta(1/c^{\delta})$ for $1/2\leq \delta<1$. From $\bracket{u_0}{u_0}\in O(1/c)$
we get that $|\bar{\gamma^c_0}\gamma^c_1\bracket{u_0}{u_1}|\in \Theta(1/c^\delta)$ and $|\bracket{u_0}{u_1}|\in O(1/\sqrt{c})$,
yielding that
$$|\gamma^c_0|\in \omega(1/c^{\delta-1/2}).$$
Since $\ket{w}$ is a unit vector, for some $k$ positive, we get
$$1=|\gamma^c_1|^2+|\gamma^c_0|^2\geq 1-\frac{k}{c^\delta}+|\gamma^c_0|^2,$$
and thus, $|\gamma^c_0|^2\in O(1/c^\delta).$ From the two conditions we conclude 
that
$$|\gamma^c_0|^2\in \omega(1/c^{2\delta-1})\cap O(1/c^\delta),$$
yielding that $\delta=1$, since the intersection of the two sets
has to be non-empty. Therefore, the function $f$ satisfies
\begin{equation}
\label{gamma0}
|\gamma^c_0|^2\in O(1/c)
\end{equation}
on the entire domain.

Now we upper bound the second largest eigenvalue of $F_c$. Since the second eigenvector $\ket{w^\bot}$ of $F_c$
is orthogonal to its dominant eigenvector, it can be written in the form:
$$\ket{w^\bot}=\tilde{\gamma}^c_1\ket{v_0}+\tilde{\gamma}^c_0\ket{v_1},$$
where $|\tilde{\gamma}^c_1|=|\gamma^c_1|$ and $|\tilde{\gamma}^c_0|=|\gamma^c_0|$.
We get that
$$\lambda_c=\bra{w^\bot}F_c\ket{w^\bot}=|\tilde{\gamma}^c_1|^2\bracket{u_0}{u_0}+|\tilde{\gamma}^c_0|^2\bracket{u_1}{u_1}+2{\rm Re}
(\overline{\tilde{\gamma^c_1}}\tilde{\gamma^c_0}\bracket{u_0}{u_1}).$$
From the assumption $\bracket{u_0}{u_0}\in O(1/c)$ and~(\ref{gamma0}) we conclude that
$$\lambda_c\in O(1/c).$$
\end{proof}

\begin{lemma}
\label{sepapprox}
Let $c,k>0$. Consider the game from Prot.~\ref{fig:challenge} and let $X$ and $Y$ denote the respective registers of Bob, corresponding to Alice's choices of $h$ and $h'$. To any strategy based on the outcomes of a separable measurement $\mathcal{M}=(E_0,E_1,E_?)$ on $\hil_{X}\otimes\hil_{Y}$ with probability of error
$q_{\rm err}\leq \frac{1}{2(c+1)k}$ and the expected payoff $p$, there exists a strategy using a separable
measurement $\mathcal{M}'=(E'_0,E'_1,E'_?)$ in the form:
$$E'_0=G^0_0\otimes G^1_0+G^0_1\otimes G^1_1,\ \ \ E'_1=G^0_0\otimes G^1_1+G^0_1\otimes G^1_0, \ \ \ E'_?=1-E'_0-E'_1$$
with the expected payoff $p'$, satisfying:
$$|p-p'|\in \frac{1}{k}+O(1/\sqrt{c}).$$
\end{lemma}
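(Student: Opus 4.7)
The plan is to exploit the small-error hypothesis to show that, on the relevant subspace, each separable term of $E_0$ and $E_1$ is close to a product of rank-one projectors encoding an \emph{identify-then-compare} strategy, and then to collapse the decomposition into the target bilinear form. First I would restrict attention to the two-dimensional subspace $V_i=\mathrm{span}\{\ket{\psi_{x_0}},\ket{\psi_{x_1}}\}$ on each of Bob's sides $i\in\{0,1\}$: the input state always lies in $V_0\otimes V_1$, so modifying the POVM off this subspace is free. Writing $E_0=\sum_k\lambda_k A_k\otimes B_k$ with PSD $A_k,B_k$ normalised so that $\|A_k\|_\infty=\|B_k\|_\infty=1$, the bound on $q_{\rm err}$ together with the four equiprobable inputs gives
\[
\sum_k\lambda_k\bra{\psi_{x_0}}A_k\ket{\psi_{x_0}}\bra{\psi_{x_1}}B_k\ket{\psi_{x_1}}\;\leq\;4q_{\rm err}\;\leq\;\tfrac{2}{(c+1)k},
\]
and symmetrically with $x_0\leftrightarrow x_1$; analogous inequalities hold for $E_1$ evaluated on the diagonal states $\ket{\psi_{x_j}\psi_{x_j}}$.

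Next, I would split each sum by magnitude. Terms in which both $\bra{\psi_{x_0}}A_k\ket{\psi_{x_0}}$ and $\bra{\psi_{x_1}}B_k\ket{\psi_{x_1}}$ exceed $1/\sqrt{c+1}$ carry aggregate weight $\sum_k\lambda_k\leq 2/k$, and can be absorbed into $E'_?$ at a payoff cost of $O(1/k)$. For each of the remaining terms, Lemma~\ref{lemma_nonlocal} applied to whichever normalised factor carries the small expectation (with $\ket{v_0}=\ket{\psi_{x_j}}$ for the matching $j$) shows that factor is within operator norm $O(1/\sqrt{c})$ of the rank-one projector in the corresponding $V_i$ orthogonal to $\ket{\psi_{x_j}}$---precisely the optimal unambiguous identifier of the other state.

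I would then define universal local POVM elements $G^i_\alpha$ as the rank-one projector in $V_i$ orthogonal to $\ket{\psi_{x_{1-\alpha}}}$, bucket each retained term into one of four classes $(\alpha,\beta)\in\{0,1\}^2$ according to the projector it approximates on each side, and aggregate weights. The two ``diagonal'' buckets of $E_0$ assemble into $E'_0=G^0_0\otimes G^1_0+G^0_1\otimes G^1_1$, the two ``off-diagonal'' buckets of $E_1$ into $E'_1=G^0_0\otimes G^1_1+G^0_1\otimes G^1_0$, and the leftover operator mass goes into $E'_?=I-E'_0-E'_1$. Each per-term replacement costs operator norm $O(1/\sqrt{c})$ on $V_0\otimes V_1$; since the payoff is a fixed linear combination of a handful of expectation values $\bra{\psi_\alpha\psi_\beta}E_j\ket{\psi_\alpha\psi_\beta}$, summing the contributions yields $|p-p'|\leq \frac{1}{k}+O(1/\sqrt{c})$.

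The main obstacle is the bookkeeping of the separable decomposition: generically it has many product terms, and naive term-by-term replacement preserves only separability, not the rigid bilinear structure demanded. The crucial point that makes the collapse work is that the approximating projectors $G^i_\alpha$ are \emph{universal}---they depend only on the fixed states $\ket{\psi_{x_0}},\ket{\psi_{x_1}}$ and not on the decomposition index $k$---so the sum over $k$ within each bucket aggregates into a scalar multiple of $G^0_\alpha\otimes G^1_\beta$, while coherences between buckets and mass removed from ``bad'' terms are shelved into $E'_?$ at cost $O(1/k)$.
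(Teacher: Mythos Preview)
Your high-level plan---reduce to the two-dimensional span, use the small-error hypothesis to force each separable summand close to a product of rank-one ``identifier'' projectors, then bucket and aggregate---is sound and matches the paper's strategy in spirit. But there is a genuine gap in the last step, and a secondary one earlier.

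The main gap is the claim that bucketing yields the required bilinear form. After bucketing you obtain, for $E_0$ and $E_1$ together, four terms of the shape $c_{\alpha\beta}\,P^0_\alpha\otimes P^1_\beta$ with \emph{independent} scalar weights $c_{\alpha\beta}$. The target form $E'_0=G^0_0\otimes G^1_0+G^0_1\otimes G^1_1$, $E'_1=G^0_0\otimes G^1_1+G^0_1\otimes G^1_0$ is strictly stronger: it is exactly ``apply a \emph{single} local POVM $(G^i_0,G^i_1,G^i_?)$ on each side, then compare''. For your aggregated operators to fit this mould you would need $c_{\alpha\beta}=u_\alpha v_\beta$ to factor, which nothing in your argument enforces. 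The paper does not get this for free either: it first passes through an intermediate measurement with operators $G^i_{\alpha\beta}$ depending on \emph{both} indices, and then invokes \cite{KKB05} to argue that, because the input is a product state, the outcome distribution on each side is independent of the other side's state, so the local measurements can be optimised separately---only then does the decoupled form $G^i_\alpha$ emerge. Your proposal skips this reduction. Relatedly, taking each $G^i_\alpha$ to be the full rank-one projector orthogonal to $\ket{\psi_{x_{1-\alpha}}}$ makes $G^i_0+G^i_1$ have operator norm $1+\tau>1$, so $E'_?=I-E'_0-E'_1$ would fail to be positive; the correct $G^i_\alpha$ must be sub-normalised.

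A secondary issue: with threshold $1/\sqrt{c+1}$, each retained term has \emph{one} small factor per error constraint, not both. You need to combine the two constraints (coming from the two ``wrong'' inputs $\ket{\psi_{x_0}\psi_{x_1}}$ and $\ket{\psi_{x_1}\psi_{x_0}}$) in a short case analysis to conclude that both factors are simultaneously close to identifiers; you gesture at ``symmetrically with $x_0\leftrightarrow x_1$'' but never carry this out. The paper sidesteps this by reinterpreting the separable decomposition as a refined measurement with a post-processing function, computing the conditional payoff explicitly, and using Markov's inequality to discard outcomes whose \emph{conditional} local error exceeds $\tfrac{1}{2(c+1)}$; this automatically leaves both factors with $O(1/c)$ wrong-state expectation, so Lemma~\ref{lemma_nonlocal} applies to each side directly and delivers the $O(1/\sqrt{c})$ bound without further casework.
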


\begin{proof}
For simplicity of the notation, let us define $\ket{\psi_0}\assign \ket{\psi_{x_0}}$ and $\ket{\psi_1}\assign \ket{\psi_{x_1}}$, where $\ket{\psi_{x_0}}$ and $\ket{\psi_{x_1}}$ come from Prot.~\ref{fig:challenge}.

Every element of a separable measurement on $\hil_{X}\otimes\hil_{Y}$
can be written as a sum of tensor products of positive semi-definite
operators. In particular, the elements of $\mathcal{M}$ can
be written in the form:
$$E_{b(x,y)}\assign\sum_{x,y} F^0_{b(x,y),x}\otimes F^1_{b(x,y),y}.$$
Operators $F^0_{b(x,y),x}\otimes F^1_{b(x,y),y}$ can be viewed as the
elements of a new measurement $\mathcal{N}$, refining $\mathcal{M}$.
Since the states $\ket{\psi_{0}}$ and $\ket{\psi_{1}}$ span a 2-dimensional Hilbert space, all operators $F^0_{b_{x,y},x}$ and $F^1_{b_{x,y},y}$ can be restricted to correspond to  $2\times 2$ matrices in some basis of this space. 

The function $b:(x,y)\rightarrow \{0,1,?\}$ is a post-processing function
of the outcomes of $\mathcal{N}$, determining the outcome of $\mathcal{M}$
(0 corresponds to the states being equal, 1 to them being different, and ? denotes an inconclusive answer).
Let $A$ denote the sets of all pairs $(x,y)$ of outcomes of $\mathcal{N}$.
To every pair $(x,y)\in A$ we assign
$(q^0_x,q^1_y)\in[0,1/2]^2$ -- the probabilities of error in guessing the factor states of $\hil_{X}$ and $\hil_{Y}$,
conditioned on measuring $x$ and $y$, respectively. Let $W_0$ and $W_1$ denote
the random variables assigned to the states of $\hil_{X}$ and $\hil_{Y}$, respectively. The probability space of both $W_0$ and $W_1$ is $\{0,1\}$, since the state of either of the subsystems is  $\ket{\psi_0}$ or $\ket{\psi_1}$.
For $\zeta\in\{0,1\}$, let
$x\rightarrow \zeta$, $y\rightarrow \zeta$
stand for ${\rm Pr}[W_0=1-\zeta|x],{\rm Pr}[W_1=1-\zeta|y]\leq \frac{1}{2(c+1)}$, respectively,
where the probabilities are conditioned on the outcomes of $\mathcal{N}$ in the respective subsystems.
Consider 
measurement $\mathcal{M}^*\assign(E^*_0,E^*_1,E^*_?)$ with the same refined set of outputs $A$
as $\mathcal{M}$ (which now will be indexed differently) in the following form:

$$
E^*_0= E^*_{0,0}+E^*_{1,1},\ \ \ \ \ 
E^*_1= E^*_{0,1}+E^*_{1,0}, \ \ \ \ \ 
E^*_?= \I-E^*_0-E^*_1,
$$
where
\begin{equation}
\label{Estar}
E^*_{\alpha,\beta} \assign \sum_{x\rightarrow \alpha,y\rightarrow \beta} F^0_{\alpha,x}\otimes F^1_{\beta,y}.
\end{equation}
We show that the difference of the expected payoff $p$ of $\mathcal{M}$ and the expected payoff $p^*$ of $\mathcal{M}^*$ satisfies:
\begin{equation}
\label{differ1k}
|p-p^*|\leq \frac{1}{k}.
\end{equation}

Since the refined sets of possible outcomes of both $\mathcal{M}^*$ and $\mathcal{M}$ are the same, the two measurements only
differ in the post-processing functions denoted by $b$ and $b^*$, respectively. In other words, 
$\mathcal{M}^*$ differs from $\mathcal{M}$ in the  arrangement of the same set of summands in the three sums defining measurement elements $(E_0,E_1,E_?)$ and $(E^*_0,E^*_1,E^*_?)$.

 Consider any strategy which upon measuring $(x,y)$ yields a conclusive answer.
For the corresponding expected payoff $p_{x,y}$ conditioned on measuring
$(x,y)$ we then get:
\begin{eqnarray}
p_{x,y}&=&(1-q^0_x)(1-q^1_y)+q^0_xq^1_y-c(q^0_x(1-q^1_y)+(1-q^0_x)q^1_y)\nonumber\\
&=&1-(c+1)(q^0_x+q^1_y-2q^0_xq^1_y)\label{payoff_xy}.
\end{eqnarray}

If on the other hand, measuring $(x,y)$ implies the answer of $\mathcal{M}$ to be inconclusive,
the expected payoff conditioned on measuring $(x,y)$ will be $0$. Consequently,
the optimal post-processing strategy (with the maximum payoff) should output $b(x,y)=?$ for every $(x,y)$ satisfying
$q^0_x+q^1_y-2q^0_xq^1_y>\frac{1}{c+1}$, otherwise it outputs a conclusive answer. In particular, the output
should be inconclusive for all pairs $(x,y)$ such that $q^0_x>\frac{1}{c+1}$ or $q^1_y>\frac{1}{c+1}$, and conclusive
if both $q^0_x,q^1_y\leq \frac{1}{2(c+1)}$.

However, only the knowledge that
$(q^0_x,q^1_y)\in[0,\frac{1}{c+1}]^2\setminus [0,\frac{1}{2(c+1)}]^2$ does not allow us to determine what
is the best output in order to maximize the payoff. We analyze this problem with respect
to the probability of error allowed for the post-processing function.

We assume that the answer of $\mathcal{N}$ with the post-processing function $b$ can be false with probability at most $q_{\rm err}\leq \frac{1}{2k(c+1)}$.
According to Markov's inequality, measuring $(x,y)$ such that either $q^0_x>kq_{\rm err}$ or
$q^1_y>kq_{\rm err}$ does not allow to output a conclusive answer with probability larger than $1/k$.
Thus, for either $q^0_x>\frac{1}{2(c+1)}$ or $q^1_y>\frac{1}{2(c+1)}$, the answer cannot be conclusive with probability
larger than $1/k$. In the latter we analyze the difference of the expected payoffs for the post-processing function $b$
 and for a newly defined $b^*$ such that for any $(x,y)$ satisfying $q^0_x>\frac{1}{2(c+1)}$ or $q^1_y>\frac{1}{2(c+1)}$,
the output is $b^*(x,y)=?$.

Consider every pair $(x,y)$ such that by modifying $b(x,y)$ into $b^*(x,y)$, $p_{x,y}$ decreases and
compute the difference of $p_{x,y}$ and $p^*_{xy}$ in this case. We have that either
$q^0_x\in(\frac{1}{2(c+1)},\frac{1}{c+1}]$ or $q^1_y\in (\frac{1}{2(c+1)},\frac{1}{c+1}]$, yielding that
$$p_{x,y}=1-(c+1)(q^0_x+q^1_y-2q^0_xq^1_y)<\frac{1}{2}.$$
It means that for every pair $(x,y)$ for which the value of the post-processing function was modified, $p_{x,y}$
decreased by at most $1/2$. However, since the answer of $\mathcal{M}$ is false with probability at most $q_{\rm err}$,
the functions $b$ and $b^*$ cannot differ anywhere except for a set of $(x,y)$ measured with probability
at most $1/k$, concerning that $q_{\rm err}\leq\frac{1}{2k(c+1)}$. This gives us\begin{equation}
\label{payoffdif}
|p-p^*|\leq\frac{1}{k}.
\end{equation}

We have shown that a separable measurement $\mathcal{M}$ can be approximated by a separable measurement
$\mathcal{M}^*$ in the special form. In the following we show that $\mathcal{M}^*$ can be approximated by a
measurement in the form from the statement up to a difference in payoffs which is
in $O(1/\sqrt{c})$. The statement of the lemma then follows from the triangle inequality.

Our next goal is to construct a measurement $\mathcal{M}'=(E'_0,E'_1,E'_?)$ in the form:
$$
E'_0=G^0_{00}\otimes G^1_{00}+G^0_{11}\otimes G^1_{11},\ \ \ \
E'_1=G^0_{01}\otimes G^1_{01}+G^0_{10}\otimes G^1_{10},\ \ \ \ 
E'_?=\I-E'_0-E'_1,
$$
approximating the measurement $\mathcal{M}^*$ with respect to the expected payoff.
In the definition of the elements of $\mathcal{M}'$,
the upper index of $G_{ab}^\zeta$ specifies the subsystem, the first bit of the lower index
determines the outcome in the first subsystem, and the second bit of the lower index determines
the outcome in the second subsystem.

Consider the previously constructed measurement $\mathcal{M}^*$. Fix $\alpha,\beta\in\{0,1\}$ and define
$F_x^0\assign \frac{F_{\alpha,x}^0}{\|F_{\alpha,x}^0\|_\infty}$, $F_y^1\assign \frac{F_{\beta,y}^1}{\|F_{\beta,y}^1\|_\infty}$,
$\mu_{x,y}\assign \|F_{\alpha,x}^0\|_\infty\cdot \|F_{\beta,y}^1\|_\infty.$
First, we construct positive-semidefinite operators $\tilde{G}_{\alpha,\beta}^0\otimes \tilde{G}_{\alpha,\beta}^1$, approximating
$$E^*_{\alpha,\beta}=\sum_{x\rightarrow \alpha,y\rightarrow \beta}\mu_{x,y}F_x^0\otimes F_y^1$$
 (defined by~(\ref{Estar})), where the guesses of $\alpha$ and $\beta$ conditioned on
measuring $F_x^0$ and $F_y^1$ are incorrect with probability at most $\frac{1}{2(c+1)}$. We require these operators to satisfy:
\begin{enumerate}
\item $p_{E^*_{\alpha,\beta}}=p_{\tilde{G}_{\alpha,\beta}^0\otimes \tilde{G}_{\alpha,\beta}^1},$
where  $p_{E^*_{\alpha,\beta}}$ and $p_{\tilde{G}_{\alpha,\beta}^0\otimes \tilde{G}_{\alpha,\beta}^1}$ denote the expected payoffs
conditioned on measuring $E^*_{\alpha,\beta}$ and $\tilde{G}_{\alpha,\beta}^0\otimes \tilde{G}_{\alpha,\beta}^1$, respectively.
\item
For all $\zeta_0,\zeta_1,\alpha,\beta \in\{0,1\}:$
$$ \left|\bra{\psi_{\zeta_0},\psi_{\zeta_1}}\tilde{G}_{\alpha,\beta}^0\otimes 
\tilde{G}_{\alpha,\beta}^1\ket{\psi_{\zeta_0},\psi_{\zeta_1}}-\bra{\psi_{\zeta_0},\psi_{\zeta_1}}E^*_{\alpha,\beta}\ket{\psi_{\zeta_0},\psi_{\zeta_1}}\right|\in O(1/\sqrt{c}),$$
\item
$\|\sum_{\alpha,\beta} \tilde{G}_{\alpha,\beta}^0\otimes\tilde{G}_{\alpha,\beta}^1\|_\infty\in 1+O(1/c).$
\end{enumerate}

We now describe the construction of operators $\tilde{G}_{\alpha,\beta}^0$ and $\tilde{G}_{\alpha,\beta}^1$.
The respective dominant eigenvectors of $F_x^0$ and $F_y^1$ can be written as
\begin{eqnarray*} 
\ket{w_0}&=&\gamma_{0,x}^0\ket{\psi_{1-\alpha}}+\gamma_{1,x}^0\ket{\psi_{1-\alpha}^\bot}, \\
\ket{w_1}&=&\gamma_{0,y}^1\ket{\psi_{1-\beta}}+\gamma_{1,y}^1\ket{\psi_{1-\beta}^\bot},
\end{eqnarray*}
where for each $\zeta\in\{0,1\}$, $\ket{\psi_\zeta^\bot}$ denotes the unit
vector spanned by $\ket{\psi_0}$ and $\ket{\psi_1}$, orthogonal to $\ket{\psi_\zeta}$.
According to Lemma~\ref{lemma_nonlocal}, there exists $\kappa$ positive such that for each $x$ and $y$,
$|\gamma_{0,x}^0|^2\leq \frac{\kappa}{c}$ and $|\gamma_{0,y}^1|^2\leq \frac{\kappa}{c}$.
We define operators $\tilde{G}_{\alpha,\beta}^0$ and $\tilde{G}_{\alpha,\beta}^1$ by
\begin{eqnarray*}
\tilde{G}_{\alpha,\beta}^0 &\assign& \left(1-\frac{\kappa}{c}\right)\cdot\sqrt{\sum_{x,y}\mu_{x,y}}\proj{\psi_{1-\alpha}^\bot}+\nu_0(c)\proj{\psi_{1-\alpha}},\\
\tilde{G}_{\alpha,\beta}^1  &\assign& \left(1-\frac{\kappa}{c}\right)\cdot\sqrt{\sum_{x,y}\mu_{x,y}}\proj{\psi_{1-\beta}^\bot}+\nu_1(c)\proj{\psi_{1-\beta}}
\end{eqnarray*}
for non-negative functions $\nu_0,\nu_1\in O(1/c)$ chosen to be such that
$$p_{E^*_{\alpha,\beta}}=p_{\tilde{G}_{\alpha,\beta}^0\otimes \tilde{G}_{\alpha,\beta}^1}.$$
Such a choice of parameters is possible, due to the fact the the probability of
a wrong guess, conditioned on the outcome $E^*_{\alpha,\beta}$ is in $O(1/c)$.
Since operators $\{E^*_{\alpha,\beta}\}_{\alpha,\beta}$ form a valid POVM,
after projecting them by a projector $P\assign \proj{\psi_{1-\alpha}^\bot}\otimes \proj{\psi_{1-\beta}^\bot}$,  we get a valid POVM on the support of $P$. In other words, $\{PE^*_{\alpha,\beta}P\}_{\alpha,\beta}$ form a POVM and therefore, also the operators
$$J_{\alpha,\beta}\assign \left(1-\frac{\kappa}{c}\right)\cdot \left(\sum_{x\rightarrow \alpha,y\rightarrow \beta}\mu_{x,y}\right) \proj{\psi_{1-\alpha}^\bot}\otimes \proj{\psi_{1-\beta}^\bot},$$
lower-bounding $PE^*_{\alpha,\beta}P$,
form valid POVMs. From the condition
$$\|\sum_{\alpha,\beta}J_{\alpha,\beta}\|_\infty\leq 1,$$
we conclude that
\begin{equation}
\|\sum_{\alpha,\beta}\tilde{G}_{\alpha,\beta}^0\otimes \tilde{G}_{\alpha,\beta}^1\|_\infty\in 1+O(1/c).
\end{equation}

It remains to show that
$$
\forall \zeta_0,\zeta_1,\alpha,\beta \in\{0,1\}:\ \left|\bra{\psi_{\zeta_0},\psi_{\zeta_1}}\tilde{G}_{\alpha,\beta}^0\otimes 
\tilde{G}_{\alpha,\beta}^1\ket{\psi_{\zeta_0},\psi_{\zeta_1}}-\bra{\psi_{\zeta_0},\psi_{\zeta_1}}E^*_{\alpha,\beta}\ket{\psi_{\zeta_0},\psi_{\zeta_1}}\right|\in O(1/\sqrt{c}).$$

By definition of $\tilde{G}_{\alpha,\beta}^0$ and $\tilde{G}_{\alpha,\beta}^1$, this is true if $\zeta_0\neq \alpha$ or $\zeta_1\neq \beta$. We now discuss the remaining case. It follows from Lemma~\ref{lemma_nonlocal}, applied to each $F_x^0\otimes F_y^1$ and the construction of $\tilde{G}_{\alpha,\beta}^0\otimes \tilde{G}_{\alpha,\beta}^1$ that
$$\left\|\sum_{x\rightarrow \alpha,y\rightarrow \beta}\mu_{x,y}F_x^0\otimes F_y^1-\tilde{G}_{\alpha,\beta}^0\otimes \tilde{G}_{\alpha,\beta}^1\right\|_\infty \in O(1/\sqrt{c}).$$
Hence, also
$$\left|\bra{\psi_{\alpha},\psi_{\beta}}\tilde{G}_{\alpha,\beta}^0\otimes 
\tilde{G}_{\alpha,\beta}^1\ket{\psi_\alpha,\psi_\beta}-\bra{\psi_\alpha,\psi_\beta}E^*_{\alpha,\beta}\ket{\psi_\alpha,\psi_\beta}\right|\in O(1/\sqrt{c}).$$

We have defined a set of operators
$\{\tilde{G}_{\alpha,\beta}^0\otimes \tilde{G}_{\alpha,\beta}^1\}_{\alpha,\beta}$, almost forming a POVM
 due to the condition (iii). Therefore, we can re-scale the elements of the set
by a factor in $1-O(1/c)$, and thereby create a POVM
$\{G_{\alpha,\beta}^0\otimes G_{\alpha,\beta}^1\}_{\alpha,\beta}$. Due to the
condition (i), the expected payoffs conditioned on measuring either
$E^*_{\alpha,\beta}$ or $G_{\alpha,\beta}^0\otimes G_{\alpha,\beta}^1$ are the same.
Finally, due to the condition (ii), the probabilities of measuring an
outcome from $\{E^*_{\alpha,\beta}\}_{\alpha,\beta}$ and an outcome from
$\{G_{\alpha,\beta}^0\otimes G_{\alpha,\beta}^1\}_{\alpha,\beta}$ differ by a value in $O(1/\sqrt{c})$. Hence,
if the probability of a conclusive answer of $\mathcal{M}^*$ is constant
then the measurement with elements
$$
E''_0\assign G_{0,0}^0\otimes G_{0,0}^1+G_{1,1}^0\otimes G_{1,1}^1,\ \ \ \ 
E''_1\assign G_{0,1}^0\otimes G_{0,1}^1+G_{1,0}^0\otimes G_{1,0}^1,\ \ \ \ 
E''_?\assign \id -E'_0-E'_1
$$
gives a conclusive answer with probability lower by at most a value in $O(1/\sqrt{c})$, and differs from $\mathcal{M}^*$ in its payoff by a value in $O(1/\sqrt{c})$.
According to~\cite{KKB05}, the state of each of the two subsystems after applying the measurement given above
is independent of the outcome in the other one. Therefore, in order to achieve certain expected
payoff, the local measurements can be optimized separately. It follows that the payoff
of measurement $(E''_0, E''_1, E''_?)$ can be matched by the payoff $p'$
of some measurement $\mathcal{M}'$ in the form:
$$E'_0=G^0_0\otimes G^1_0+G^0_1\otimes G^1_1, \ \ \ E'_1=G^0_0\otimes G^1_1+G^0_1\otimes G^1_0, \ \ \ E'_?=\id-E'_0-E'_1.$$

By applying~(\ref{differ1k}) and the triangle inequality, we finally get
that
$$|p-p'|\in O(1/k)+O(1/\sqrt{c}).$$
\end{proof}

\end{document}